\title{Maxwell's equations revisited - mental\\
imagery and mathematical symbols}
\author[1]{Matthias Geyer}
\author[2]{Jan Hausmann}
\author[2]{Konrad Kitzing}
\author[2]{\\ Madlyn Senkyr}
\author[2]{Stefan Siegmund}
\affil[1]{Institute for Materials Science, TU Dresden, Germany}
\affil[2]{Faculty of Mathematics, TU Dresden, Germany}
\providecommand{\definitionname}{Definition}
\providecommand{\theoremname}{Theorem}
\providecommand{\propositionname}{Proposition}
\providecommand{\lemmaname}{Lemma}
\providecommand{\corollaryname}{Corollary}
\providecommand{\remarkname}{Remark}
\providecommand{\examplename}{Example}
\providecommand{\historicalremarkname}{Historical Remark}
\theoremstyle{definition}
\newtheorem{definition}{\protect\definitionname}
\theoremstyle{plain}
\newtheorem{theorem}[definition]{\protect\theoremname}
\newtheorem{lemma}[definition]{\protect\lemmaname}
\theoremstyle{remark}
\newtheorem{remark}[definition]{\protect\remarkname}
\theoremstyle{definition}
\theoremstyle{definition}
\newtheorem{historicalremark}{\protect\historicalremarkname}
\numberwithin{equation}{section}
\setlist[enumerate]{label*=(\alph*),ref=(\alph*)}
\newcommand{\R}{\mathbb{R}}
\let\det\undefined
\DeclareMathOperator{\det}{det}
\DeclareMathOperator{\dom}{dom}
\renewcommand{\phi}{\varphi}
\begin{document}

\maketitle

\begin{abstract}
Using Maxwell's mental imagery of a tube of fluid motion of an imaginary fluid, we derive his equations 
$\operatorname{curl} \mathbf{E} =  -\frac{\partial \mathbf{B}}{\partial t}$,
$\operatorname{curl} \mathbf{H} = \frac{\partial \mathbf{D}}{\partial t} + \mathbf{j}$,
$\operatorname{div} \mathbf{D} = \varrho$,
$\operatorname{div} \mathbf{B} = 0$, which together with the constituting relations $\mathbf{D} = \varepsilon_0 \mathbf{E}$, $\mathbf{B} = \mu_0 \mathbf{H}$, form what we call today Maxwell's equations. Main tools are the divergence, curl and gradient integration theorems and a version of Poincare's lemma formulated in vector calculus notation. Remarks on the history of the development of electrodynamic theory, quotations and references to original and secondary literature complement the paper.
\end{abstract}

\emph{Keywords and phrases.} Maxwell's equations

\emph{Mathematics subject classification.} 78A25 

\maketitle


\section{Introduction} \label{S:label0}

\noindent
Maxwell developed his famous equations
\begin{gather*}
   \operatorname{curl} \mathbf{E} + \frac{\partial \mathbf{B}}{\partial t} = 0,
   \quad
   \operatorname{div} \mathbf{D} = \varrho,
\\
   \operatorname{curl} \mathbf{H} - \frac{\partial \mathbf{D}}{\partial t} = \mathbf{j},
   \quad
   \operatorname{div} \mathbf{B} = 0,
\\[0.8ex]
   \mathbf{D} = \varepsilon_0 \mathbf{E},
   \quad
   \mathbf{B} = \mu_0 \mathbf{H},
\end{gather*}
in what Hon and Goldstein \cite{HonGoldstein2020} call an odyssey in electromagnetics consisting of four stations: 
\pagebreak
\begin{itemize}
   \item Station 1 (1856-1858): on Faraday's lines of force \cite{Maxwell1858}
   \item Station 2 (1861-62): on physical lines of force \cite{Maxwell1861-62}
   \item Station 3 (1865): A dynamical theory of the electromagnetic field \cite{Maxwell1865}
   \item Station 4 (1873): A treatise on electricity and magnetism \cite{Maxwell1873}
\end{itemize}
Maxwell's original work is a rich source for methodological inspiration \cite{MaxwellNiven1890, Maxwell1873}. 
Many excellent books \cite{Darrigol2003, HonGoldstein2020, Siegel1991} and papers \cite{Arthur2013,Bork1963,Turnbull2013} have been written on Maxwell's construction of an electrodynamic theory, the transcription of Maxwell's equations to vector analysis notation \cite{Hunt2012, Unz1963} and their contemporary presentation with differential forms \cite{Schleifer1983,WarnickEtAl1997,Lindel2004,FumeronEtAl2020}. When it comes to teaching Maxwell's equations many aspects of the didactic transposition process from Maxwell's original work, to textbooks and then to classroom are lost. As pointed out in \cite{KaramEtal2014} this is, in particular, the case for the displacement current term $\frac{\partial \mathbf{D}}{\partial t}$ insertion. In this article we present a method for introducing Maxwell's displacement current concept, within the framework of an integral formulation of Maxwell's equations (for other didactic derivations of Maxwell's equations, see e.g.\ \cite{DienerEtal2013,Gauthier1983, KaramEtal2014}). Our method is deliberately not formulated in the elegant and more advanced differential form calculus, but the readily available vector calculus notation. We stay very close to Maxwell's original work in the following specific aspects:
\begin{itemize}
   \item We start out with Maxwell's first paper on Faraday's lines of force \cite{Maxwell1858}. 
   \item We apply his analogy of an imaginary fluid to electricity and magnetism.
   \item We use his mental imagery of a tube of fluid motion.
\end{itemize}
Historical Remarks are added for the reader's convenience and as references to Maxwell's work and secondary literature. Mathematical preliminaries are the well-known integration theorems
\begin{equation*}\textstyle
    f_1 - f_2
    =
    \!
    \int\limits_{\mathrm{curve}}
    \operatorname{grad}
    f \cdot
    d \mathbf{s}
    ,
    \int\limits_{\mathrm{curve}}
    \mathbf{a} \cdot
    d \mathbf{s}
    =
    \!
    \int\limits_{\mathrm{surface}}
    \operatorname{curl}
    \mathbf{a} \cdot
    d \mathbf{S}
    ,
    \int\limits_{\mathrm{surface}}
    \mathbf{b} \cdot
    d \mathbf{S}
    =
    \!
    \int\limits_{\mathrm{volume}}
    \operatorname{div}
    \mathbf{b}
    \, d V
\end{equation*}
which are recalled in Theorem \ref{thm:integration}, together with a version of Poincare`s Lemma.

\pagebreak
\begin{historicalremark}[Mental Imagery \& Mathematical Symbols]\label{rem:MentalImagery}
On September 15th, 1870, Maxwell  gave an \emph{Address to the Mathematical and Physical Sections of the British Association} (see e.g.\ the reprint in \cite[vol.\ 2, pp.\ 218-219]{MaxwellNiven1890}) in which he also explains the role of mental imagery and mathematical symbols:

\begin{quote}\footnotesize
The human mind is seldom satisfied, and is certainly never exercising its highest functions, when it is doing the work of a calculating machine. What the man of science, whether he is a mathematician or a physical inquirer, aims at is, to acquire and develop clear ideas of the things he deals with. $[\dots]$

But if he finds that clear ideas are not to be obtained by means of processes the steps of which he is sure to forget before he has reached the conclusion, it is much better that he should turn to another method, and try to understand the subject by means of well-chosen illustrations derived from subjects with which he is more familiar.

$[\dots]$ [A] truly scientific illustration is a method to enable the mind to grasp some conception or law in one branch of science, by placing before it a conception or a law in a different branch of science, and directing the mind to lay hold of that mathematical form which is common to the corresponding ideas in the two sciences $[\dots]$.

The correctness of such an illustration depends on whether the two systems of ideas which are compared together are really analogous in form, or whether, in other words, the corresponding physical quantities really belong to the same mathematical class. When this condition is fulfilled, the illustration is not only convenient for teaching science in a pleasant and easy manner, but the recognition of the formal analogy between the two systems of ideas leads to a knowledge of both, more profound than could be obtained by studying each system separately.

$[\dots]$ [S]cientific truth should be presented in different forms, and should be regarded as equally scientific, whether it appears in the robust form and the vivid colouring of a physical illustration, or in the tenuity and paleness of a symbolical expression.

Time would fail me if I were to attempt to illustrate by examples the scientific value of the classification of quantities. I shall only mention the name of that important class of magnitudes having direction in space which Hamilton has called vectors, and which form the subject-matter of the Calculus of Quaternions, a branch of mathematics which, when it shall have been thoroughly understood by men of the illustrative type, and clothed by them with physical imagery, will become, perhaps under some new name, a most powerful method of communicating truly scientific knowledge to persons apparently devoid of the calculating spirit.
\end{quote}
\vspace*{-6ex}\hfill $\diamond$
\end{historicalremark}

In this paper we adopt Maxwell's empathetic perception that scientific truth should be presented in different forms, using mental imagery as well as mathe\-matical symbols. We follow his analogy between a model of an imaginary fluid and his equations of electrodynamics which are indeed analogous in form and belong to the same mathematical class. A truly scientific illustration taking the role of mental imagery will suggest itself by visualizing the concepts corresponding to the mathematical symbols used to describe the analogy, in our case this will be the divergence, curl and gradient integration theorems and a geometric interpretation of Poincar\'{e}'s Lemma in terms of the integration theorems.

\begin{historicalremark}[Quaternions \& Vector Analysis]
Maxwell's prevision that the Calculus of Quaternions under some new name would become a most powerful method of communicating scientific knowledge (see Remark \ref{rem:MentalImagery}) came true with the early development of vector analysis \cite{Crowe1994} from quaternions 
\begin{equation*}
   q_0
   +
   q_1 \mathbf{i}
   +
   q_2 \mathbf{j}
   +
   q_3 \mathbf{k}
   \quad
   (q_0, q_1, q_2, q_3 \in \R)
   \qquad
   \text{with }
   \mathbf{i}^2 = \mathbf{j}^2 = \mathbf{k}^2 = \mathbf{ijk} = -1.
\end{equation*}
Hamilton who discovered the quaternions in 1843 noted already the special role of what he called a \emph{vector} 
\begin{equation*}
   q_1 \mathbf{i}
   +
   q_2 \mathbf{j}
   +
   q_3 \mathbf{k}
\end{equation*}
compared to what he called the \emph{scalar part} $q_0$ of $q \coloneqq q_0 + q_1 \mathbf{i} + q_2 \mathbf{j} + q_3 \mathbf{k}$. His notation $Sq = q_0$ and $Vq = q_1 \mathbf{i} + q_2 \mathbf{j} + q_3 \mathbf{k}$ applied to the product of two vectors
\begin{gather*}
   S(x \mathbf{i} + y \mathbf{j} + z \mathbf{k})(x' \mathbf{i} + y' \mathbf{j} + z' \mathbf{k})
   =
   -(xx' + yy' + zz'),
\\
   V(x \mathbf{i} + y \mathbf{j} + z \mathbf{k})(x' \mathbf{i} + y' \mathbf{j} + z' \mathbf{k})
   =
   (y z' - z y') \mathbf{i} + (z x' - x z') \mathbf{j} + (x y' - y x') \mathbf{k},
\end{gather*}
founded the vector calculus notation of the scalar and cross product, respectively, which was then most prominently developed and promoted by Heaviside (see e.g.\ \cite{Hunt2012} and the references therein) and by Gibbs \cite{Gibbs1884, GibbsWilson1901} who coined most of the notation
\begin{align*}
   &
   \nabla f
   \coloneqq
   \operatorname{grad} f
   \coloneqq
   \big( \partial_1 f, \partial_2 f, \partial_3 f \big)
   \qquad \text{for } f \in C^1(\R^3, \R),
\\[0.8ex]
   &
   \nabla \times \mathbf a
   \coloneqq
   \operatorname{curl} \mathbf a
   \coloneqq
   \big(
   \partial_2 \mathbf a_3 - \partial_3 \mathbf a_2, 
   \partial_3 \mathbf a_1 - \partial_1 \mathbf a_3, 
   \partial_1 \mathbf a_2 - \partial_2 \mathbf a_1 
   \big),
\\[0.7ex]
   &
   \nabla \cdot \mathbf a
   \coloneqq
   \operatorname{div} \mathbf a
   \coloneqq
   \partial_1 \mathbf a_1 + \partial_2 \mathbf a_2 + \partial_3 \mathbf a_3
   \qquad \text{for } \mathbf a \in C^1(\R^3, \R^3),
\end{align*}
which is still used today in vector calculus.
\hfill $\diamond$
\end{historicalremark}

Heaviside developed vector analysis independently of Gibbs and it was him who expressed Maxwell's equations in vector calculus notation (see e.g.\ \cite{Arthur2013, Unz1963} and the many references therein). 
\begin{figure}[h!]
\small
\begin{tcolorbox}[ams align*]
   \left.
   \begin{aligned}
      a &= \tfrac{dH}{dy} - \tfrac{dG}{dz}
   \\
      b &= \tfrac{dF}{dz} - \tfrac{dH}{dx}
   \\
      c &= \tfrac{dG}{dx} - \tfrac{dF}{dy}
   \end{aligned}
   \right\}
   & \quad \mathrm{(A)}
   & \mathbf{B} &= \nabla \times \mathbf{A}
\\[0.2ex]
   \left.
   \begin{aligned}
      P &= c \tfrac{dy}{dt} - b \tfrac{dz}{dt} - \tfrac{dF}{dt} - \tfrac{d\Psi}{dx}
   \\
      Q &= a \tfrac{dz}{dt} - c \tfrac{dx}{dt} - \tfrac{dG}{dt} - \tfrac{d\Psi}{dy}
   \\
      R &= b \tfrac{dx}{dt} - a \tfrac{dy}{dt} - \tfrac{dH}{dt} - \tfrac{d\Psi}{dz}
   \end{aligned}
   \right\}
   & \quad \mathrm{(B)}
   & \mathbf{E} &= \mathbf{v} \times \mathbf{B} - \tfrac{\partial \mathbf{A}}{\partial t} - \nabla \phi
\\[0.2ex]
   \left.
   \begin{aligned}
      X &= v c - w b
   \\
      Y &= w a - u c
   \\
      Z &= u b - v a
   \end{aligned}
   \right\}
   & \quad \mathrm{(C)}
   & \mathbf{F} &= \mathbf{J} \times \mathbf{B}
\\[0.2ex]
   \left.
   \begin{aligned}
      a &= \alpha + 4 \pi A
   \\
      b &= \beta + 4 \pi B
   \\
      c &= \gamma + 4 \pi C
   \end{aligned}
   \right\}
   & \quad \mathrm{(D)}
   & \mathbf{B} &= \mu_0 \mathbf{H} + \mathbf{M}
\\[0.2ex]
   \left.
   \begin{aligned}
      4 \pi u &= \tfrac{d\gamma}{dy} - \tfrac{d\beta}{dz}
   \\
      4 \pi v &= \tfrac{d\alpha}{dz} - \tfrac{d\gamma}{dx}
   \\
      4 \pi w &= \tfrac{d\beta}{dx} - \tfrac{d\alpha}{dy}
   \end{aligned}
   \right\}
   & \quad \mathrm{(E)}
   & \mathbf{J} &= \nabla \times \mathbf{H}
\\[0.2ex]
   \mathfrak{D} = \tfrac{1}{4\pi} \mathfrak{E}
   \hspace*{2.3ex}
   & \quad \mathrm{(F)}
   & \mathbf{D} &= \epsilon \mathbf{E}
\\[0.2ex]
   \mathfrak{K} = C \mathfrak{E}
   \hspace*{2.3ex}
   & \quad \mathrm{(G)}
   & \mathbf{J}_c &= \sigma \mathbf{E}
\\[-0.1ex]
   \mathfrak{C} = \mathfrak{K} + \mathfrak{\dot D}
   \hspace*{2.3ex}
   & \quad \mathrm{(H)}
   & \mathbf{J} &= \mathbf{J}_c + \tfrac{\partial \mathbf{D}}{\partial t}
\\[0.1ex]
   \left.
   \begin{aligned}
      u &= p + \tfrac{d f}{d t}
   \\
      v &= q + \tfrac{d g}{d t}
   \\
      w &= r + \tfrac{d h}{d t}
   \end{aligned}
   \right\}
   & \quad \mathrm{(H^*)}
   & \mathbf{J} &= \mathbf{J}_c + \tfrac{\partial \mathbf{D}}{\partial t}
\\[0.2ex]
   \mathfrak{C} = (C + \tfrac{1}{4 \pi} K \tfrac{d}{dt}) \mathfrak{E}
   \hspace*{2.3ex}
   & \quad \mathrm{(I)}
   & \mathbf{J} &= \sigma \mathbf{E} + \epsilon \tfrac{\partial \mathbf{E}}{\partial t}
\\[0.1ex]
   \left.
   \begin{aligned}
      u &= C P + \tfrac{1}{4 \pi} K \tfrac{d P}{d t}
   \\
      v &= C Q + \tfrac{1}{4 \pi} K \tfrac{d Q}{d t}
   \\
      w &= C R + \tfrac{1}{4 \pi} K \tfrac{d R}{d t}
   \end{aligned}
   \right\}
   & \quad \mathrm{(I^*)}
   & \mathbf{J} &= \sigma \mathbf{E} + \epsilon \tfrac{\partial \mathbf{E}}{\partial t}
\\[0.2ex]
   \rho = \tfrac{df}{dx} + \tfrac{dg}{dy} +\tfrac{dh}{dz}
   \hspace*{2.3ex}
   & \quad \mathrm{(J)}
   & \varrho &= \nabla \cdot \mathbf{D}
\\[0.2ex]
   \sigma = l f + m g + n h + l' f' + m' g' + n' h'
   \hspace*{2.3ex}
   & \quad \mathrm{(K)}
   & \varrho_s &= \mathbf{n} \cdot (\mathbf{D_1} - \mathbf{D_2})
\\[0.2ex]
   \mathfrak{B} = \mu \mathfrak{H}
   \hspace*{2.3ex}
   & \quad \mathrm{(L)}
   & \mathbf{B} &= \mu \mathbf{H}
\end{tcolorbox}
\vspace*{-2ex}
\caption{Maxwell's original equations.\label{fig:MaxwellsEquationsOriginal}}
\end{figure}
Figure \ref{fig:MaxwellsEquationsOriginal} displays Maxwell's original set of equations \cite[vol.\ 2, pp.\ 215-233]{Maxwell1873} which were almost completely in coordinate-wise ``longhand'' notation, together with their interpretation in modern Gibbsean vector calculus notation (cp.\ also \cite[p.\ 3]{Lindel2004}).

\section{Vector analysis, integration theorems and\newline Poincar\'{e}'s Lemma} \label{S:label1}

One of the cornerstones of vector analysis is the fact that the sequence
\begin{equation*}
   0 
   \to
   \R
   \to
   C^\infty(\Omega, \R)
   \stackrel{\operatorname{grad}}{\to}
   C^\infty(\Omega, \R^3)
   \stackrel{\operatorname{curl}}{\to}
   C^\infty(\Omega, \R^3)
   \stackrel{\operatorname{div}}{\to}
   C^\infty(\Omega, \R)
   \to
   0
\end{equation*}
is exact. We work with the following more detailed description of this statement.

\begin{theorem}[Poincar\'{e}'s Lemma]\label{thm:poincare-lemma}
Let $\Omega \subset \R^3$ be open and star shaped with star center $x_0 \in \Omega$.
\begin{itemize}
\item[(a)] $\operatorname{im}(\operatorname{grad}) = \operatorname{ker}(\operatorname{curl})$, more precisely, for each $\mathbf{a} \in C^1(\Omega, \R^3)$
\begin{equation*}
   \operatorname{curl} \mathbf{a} = 0
   \qquad
   \Leftrightarrow
   \qquad
   \exists f \in C^1(\Omega, \R)
   \colon
   \mathbf{a} = \operatorname{grad} f,
\end{equation*}
e.g.\ $f(\mathbf{x}) \coloneqq \int_0^1\langle \mathbf{a}(\mathbf{x}_0 + t(\mathbf{x}-\mathbf{x}_0)), \mathbf{x}-\mathbf{x}_0\rangle dt$ for $\mathbf{x}\in\Omega$.

\bigskip
\item[(b)] $\operatorname{im}(\operatorname{curl}) = \operatorname{ker}(\operatorname{div})$, more precisely, for each $\mathbf{b} \in C^1(\Omega, \R^3)$
\begin{equation*}
   \operatorname{div} \mathbf{b} = 0
   \qquad
   \Leftrightarrow
   \qquad
   \exists \mathbf{a} \in C^1(\Omega, \R^3)
   \colon
   \mathbf{b} = \operatorname{curl} \mathbf{a},
\end{equation*}
e.g.\ $\mathbf{a}(\mathbf{x}) \coloneqq \int_0^1 t \mathbf{b}(\mathbf{x}_0 + t(\mathbf{x} - \mathbf{x}_0)) d t \times (\mathbf{x}-\mathbf{x}_0)$ for $\mathbf{x} \in \Omega$.

\bigskip
\item[(c)] $\operatorname{im}(\operatorname{div}) = C^0(\Omega, \R)$, more precisely, for each $g \in C^0(\Omega, \R)$
\begin{equation*}
   \exists \mathbf{b} \in C^1(\Omega, \R^3)
   \colon
   g = \operatorname{div} \mathbf{b},
\end{equation*}
e.g.\ $\mathbf{b}(\mathbf{x}) \coloneqq \int_0^1 t^2 g(\mathbf{x}_0+t(\mathbf{x}-\mathbf{x}_0))(\mathbf{x}-\mathbf{x}_0) dt$ for $\mathbf{x}\in\Omega$.
\end{itemize}
\end{theorem}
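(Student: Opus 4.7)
The two easy implications, namely $\operatorname{curl} \circ \operatorname{grad} = 0$ and $\operatorname{div} \circ \operatorname{curl} = 0$, both follow immediately from Schwarz's theorem on the equality of mixed partials. The substantive content is the existence statement in the reverse direction (and in (c) the surjectivity of $\operatorname{div}$), and since the theorem already supplies explicit candidate antiderivatives, the task reduces to verifying these by differentiation under the integral sign. By a translation it costs nothing to assume $\mathbf{x}_0 = 0$, so that for instance $f(\mathbf{x}) = \int_0^1 \langle \mathbf{a}(t\mathbf{x}), \mathbf{x} \rangle \, dt$ in part (a).

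For part (a), I would differentiate to obtain
\begin{equation*}
   \partial_i f(\mathbf{x})
   = \int_0^1 \bigg[ \mathbf{a}_i(t\mathbf{x}) + t \sum_{j=1}^{3} x_j \, (\partial_i \mathbf{a}_j)(t\mathbf{x}) \bigg] \, dt.
\end{equation*}
The hypothesis $\operatorname{curl} \mathbf{a} = 0$ lets me replace $\partial_i \mathbf{a}_j$ by $\partial_j \mathbf{a}_i$ inside the sum, at which point the chain rule identifies the integrand as $\tfrac{d}{dt} \bigl[ t \, \mathbf{a}_i(t\mathbf{x}) \bigr]$. A single application of the fundamental theorem of calculus then yields $\partial_i f(\mathbf{x}) = \mathbf{a}_i(\mathbf{x})$, as desired. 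Part (c) follows by the same telescoping trick, and without even needing a hypothesis: with $\mathbf{b}(\mathbf{x}) = \int_0^1 t^2 g(t\mathbf{x}) \mathbf{x} \, dt$, differentiation under the integral gives
\begin{equation*}
   \operatorname{div} \mathbf{b}(\mathbf{x})
   = \int_0^1 \Bigl[ 3 t^2 g(t\mathbf{x}) + t^3 \langle \mathbf{x}, \nabla g(t\mathbf{x}) \rangle \Bigr] \, dt
   = \int_0^1 \frac{d}{dt} \bigl[ t^3 g(t\mathbf{x}) \bigr] \, dt
   = g(\mathbf{x}).
\end{equation*}

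Part (b) is where I expect the main obstacle, though it is a matter of bookkeeping rather than substance. Writing $\mathbf{A}(\mathbf{x}) \coloneqq \int_0^1 t\, \mathbf{b}(t\mathbf{x}) \, dt$ so that $\mathbf{a}(\mathbf{x}) = \mathbf{A}(\mathbf{x}) \times \mathbf{x}$, I would invoke the product identity
\begin{equation*}
   \operatorname{curl}(\mathbf{A} \times \mathbf{x})
   = (\operatorname{div}\mathbf{x})\mathbf{A} - (\operatorname{div}\mathbf{A})\mathbf{x} + (\mathbf{x}\cdot\nabla)\mathbf{A} - (\mathbf{A}\cdot\nabla)\mathbf{x}
   = 2\mathbf{A} - (\operatorname{div}\mathbf{A})\mathbf{x} + (\mathbf{x}\cdot\nabla)\mathbf{A},
\end{equation*}
using $\operatorname{div}\mathbf{x} = 3$ and $(\mathbf{A}\cdot\nabla)\mathbf{x} = \mathbf{A}$. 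Differentiation under the integral yields $\operatorname{div}\mathbf{A}(\mathbf{x}) = \int_0^1 t^2 (\operatorname{div}\mathbf{b})(t\mathbf{x}) \, dt$, which vanishes by hypothesis and so kills the term $(\operatorname{div}\mathbf{A})\mathbf{x}$. What remains becomes
\begin{equation*}
   \operatorname{curl} \mathbf{a}(\mathbf{x})
   = \int_0^1 \bigl[ 2 t \mathbf{b}(t\mathbf{x}) + t^2 ((\mathbf{x}\cdot\nabla)\mathbf{b})(t\mathbf{x}) \bigr] \, dt
   = \int_0^1 \frac{d}{dt}\bigl[ t^2 \mathbf{b}(t\mathbf{x}) \bigr] \, dt
   = \mathbf{b}(\mathbf{x}),
\end{equation*}
by the now-familiar total-derivative recognition. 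The only delicate points are tracking the signs in the cross-product expansion and the extra factor of $t$ produced each time one differentiates $\mathbf{b}(t\mathbf{x})$ in a spatial variable.
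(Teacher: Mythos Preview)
Your argument for (a) is essentially the paper's: differentiate under the integral, use $\operatorname{curl}\mathbf{a}=0$ to swap $\partial_i\mathbf{a}_j$ with $\partial_j\mathbf{a}_i$, and collapse via the fundamental theorem of calculus. The only cosmetic difference is that you recognize the full integrand as $\tfrac{d}{dt}\bigl[t\,\mathbf{a}_i(t\mathbf{x})\bigr]$ in one step, whereas the paper isolates the term $t\langle\operatorname{grad}a_j,\mathbf{w}\rangle$ and integrates it by parts; the arithmetic coincides. The paper does not spell out (b) or (c), deferring instead to the differential-forms Poincar\'e lemma in Lee, so your explicit verifications there go beyond what is written; your use of the product identity for $\operatorname{curl}(\mathbf{A}\times\mathbf{x})$ in (b) is a tidy way to bypass componentwise bookkeeping and is correct.

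There is, however, a regularity gap in your handling of (c): you write $\nabla g(t\mathbf{x})$, but the hypothesis is only $g\in C^0(\Omega,\mathbb{R})$, so $\nabla g$ need not exist. Your computation is valid once $g\in C^1$, but the theorem as stated claims more. In fact the explicit candidate $\mathbf{b}$ need not lie in $C^1$ when $g$ is merely continuous (take $g$ constant along rays from the star center but nondifferentiable in the angular direction; then $\mathbf{b}(\mathbf{x})=\tfrac{1}{3}g(\mathbf{x})\,\mathbf{x}$ inherits the nondifferentiability), so part (c) at the stated regularity is itself delicate. The paper sidesteps this by citing a reference that works in the $C^\infty$ category, and your argument would be complete under the same tacit smoothness assumption.
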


\begin{proof}
Theorem \ref{thm:poincare-lemma} is a special case of Poincar\'{e}'s Lemma for differential forms (see e.g.\ \cite[Theorems 11.49 \& 17.14 and Lemma 17.27]{Lee2012}). We prove only (a). The proofs of (b) and (c) have the same structure due to the fact that the exterior derivative of a differential form specializes to $\operatorname{div}$, $\operatorname{curl}$ and $\operatorname{grad}$, respectively.
\medskip

(a) The implication $(\Leftarrow)$ follows easily by computing $\operatorname{curl} \operatorname{grad} f = 0$. To show the implication $(\Rightarrow)$ define $\mathbf{w}(\mathbf{x})\coloneqq \mathbf{x}-\mathbf{x}_0$, $(w_1, w_2, w_3) \coloneqq \mathbf{w}$, $(a_1, a_2, a_3) \coloneqq \mathbf{a}$ and $f(\mathbf{x}) \coloneqq \int_0^1\sum_{k=1}^3 a_k(\mathbf{x}_0+t\mathbf{w}(\mathbf{x})) w_k(\mathbf{x}) dt$ for $\mathbf{x}\in\Omega$.
Let $j\in\{1,2,3\}$.
We use $\partial_j a_k = \partial_k a_j$ for $k\in\{1,2,3\}$ to compute
\begin{align*}
        \partial_jf(\cdot) &= \int_0^1\sum_{k=1}^3\partial_j\big(a_k(\mathbf{x}_0+t\mathbf{w}(\cdot))w_k(\cdot)\big) dt
        \\
        &= \int_0^1\Big(\sum_{k=1}^3\big\langle\operatorname{grad}a_k(\mathbf{x}_0+t\mathbf{w}(\cdot)),t\mathbf{e}_j\big\rangle w_k(\cdot) + a_j(\mathbf{x}_0+t\mathbf{w}(\cdot))\Big) dt
        \\
        &= \int_0^1 t\sum_{k=1}^3\partial_j a_k(\mathbf{x}_0+t\mathbf{w}(\cdot)) w_k(\cdot) dt + \int_0^1 a_j(\mathbf{x}_0+t\mathbf{w}(\cdot)) dt
        \\
        &= \int_0^1 t\sum_{k=1}^3\partial_k a_j(\mathbf{x}_0+t\mathbf{w}(\cdot)) w_k(\cdot) dt + \int_0^1 a_j(\mathbf{x}_0+t\mathbf{w}(\cdot)) dt
        \\
        &= \int_0^1 t\big\langle\operatorname{grad} a_j(\mathbf{x}_0+t\mathbf{w}(\cdot)),\mathbf{w}(\cdot)\big\rangle dt + \int_0^1 a_j(\mathbf{x}_0+t\mathbf{w}(\cdot)) dt.
\end{align*}
However, for $\mathbf{x}\in\Omega$
\begin{equation*}
        t\mapsto\big\langle\operatorname{grad} a_j(\mathbf{x}_0+t\mathbf{w}(\mathbf{x})),\mathbf{w}(\mathbf{x})\big\rangle = \frac{\mathrm d}{\mathrm dt}\big(t\mapsto a_j(\mathbf{x}_0+t\mathbf{w}(\mathbf{x}))\big).
\end{equation*}
We conclude by partial integration that for $\mathbf{x}\in\Omega$
\begin{align*}
        \partial_jf(\mathbf{x}) &= t a_j(\mathbf{x}_0+t\mathbf{w}(\mathbf{x}))\Big\vert_{t=0}^1 - \int_0^1 a_j(\mathbf{x}_0+t\mathbf{w}(\mathbf{x})) dt
        \\
        &\qquad + \int_0^1 a_j(\mathbf{x}_0+t\mathbf{w}(\mathbf{x})) dt = a_j(\mathbf{x}).
        \qedhere
\end{align*}
\end{proof}

\begin{historicalremark}[Early Days of Electrodynamics]
The early days of \emph{electrodynamics} - a term coined by Ampère \cite{Erlichson1998} - read like an exciting detective novel and are excellently presented e.g.\ in \cite{Darrigol2003, HonGoldstein2020, Siegel1991} and the references therein. The following timeline is adapted from \cite{Turnbull2013}.

\begin{tabular}{rl} 
\\
\toprule
   1785 & Coulomb's Law is published
\\
   1813 & Gauss's Divergence Theorem is discovered
\\
   1820 & {\O}rsted discovers that an electric current creates a magnetic field
\\
   1820 & Amp\`{e}re's work founds electrodynamics; Biot-Savart Law is discovered
\\
   1826 & Amp\`{e}re's Formula is published
\\
   1831 & Faraday's Law is published
\\
   1856 & Maxwell publishes ``On Faraday's lines of force''
\\
   1861 & Maxwell publishes ``On physical lines of force''
\\
   1865 & Maxwell publishes ``A dynamical theory of the electromagnetic field''
\\
   1873 & Maxwell publishes ``A Treatise on Electricity and Magnetism''
\\
   1888 & Hertz discovers radio waves
\\
   1940 & Einstein popularizes the name ``Maxwell's Equations''
\\ \bottomrule
\end{tabular}
\medskip

In 1785 Coulomb published his law, which states that the force between two electrical charges is proportional to the product of the charges and inversely proportional to the square of the distance between them (see e.g.\ \cite{YoutubeCoulomb} for a docu\-mentary). 

In 1813 Gauss discovered a special case of the divergence theorem. Todays version, that the surface integral of a vector field over a closed surface is equal to the volume integral of the divergence over the region inside the surface, was formulated and proved by Ostrogradsky (not by Gauss) 13 years later. Maxwell credits the divergence theorem to Ostrogradsky. He cited a paper by Ostrogradsky from the correct year but with a wrong title. This citation was later removed but can still be found in the first edition of Maxwell’s Treatise on Electricity and Magnetism.

In 1820 {\O}rsted discovered that a compass needle was deflected from the magnetic north direction by a nearby electric current. He thus confirmed a direct relationship between electricity and magnetism.  Already on September 30, 1820, Biot and Savart announced their results for the distance dependence of the magnetic force exerted by a long, straight current-carrying wire on a magnetic needle. 

In 1826 Ampère’s formula on the force between two infinitesimal current elements was published. Maxwell respected Ampère’s work and his genius but he questioned the action-at-a-distance concept which was commonly accepted then as the state-of-the-art also by Ampère e.g.\ in his formula. When Maxwell later developed his field theory of electromagnetism Ampère’s formula fell into disuse. Sometime around the 1930s Ampère’s name came to be associated with ‘‘the first law of circulation’’, then also called Ampère’s Law which, however, is not due to Ampère. One should therefore speak of Ampère’s formula which was published in 1826.

Faraday was one of the most influential scientists in history and definitely during his time. He was an excellent experimentalist who conveyed his ideas in clear and simple language (see e.g.\ \cite{YoutubeFaraday} for a documentary). He published from 1831 until 1854 in the “Philosophical Transactions of the Royal Society“ a series of articles with the title “Experimental Researches in Electricity“. He chose this also as the title of a three volume book in which he summarised his research results on electricity in 30 series of articles consisting of 3430 numbered paragraphs.
Maxwell acknowledged and accepted the phenomena discovered by {\O}rsted, Amp\`{e}re, and others but, principally, he depended on the conceptual framework and experimental achievements of Faraday \cite[p.\ 4]{HonGoldstein2020} (see \cite{Faraday} for a reprint of Faraday's Experimental Researches in Electricity). In the preface to his Treatise on Electricity and Magnetism of 1873 [16, p.\ viii], Maxwell recalled:

\vspace*{-0.2ex}
\begin{quote}\footnotesize
[B]efore I began the study of electricity I resolved to read no mathematics on the subject till I had first read through Faraday's Experimental Researches in Electricity.
\end{quote}
\vspace*{-0.3ex}

Maxwell's initial publication \cite{Maxwell1856} of 1856 (an abstract) on electromagnetism placed his methodology at the forefront \cite[Sec.\ 1.3, p.\ 9]{HonGoldstein2020}. For details we recommend the excellent discussion of the importance of methodology for Maxwell in Hon and Goldstein \cite[Secs.\ 1.3 \& 1.4]{HonGoldstein2020} from which we quote freely. At the beginning of the abstract \cite{Maxwell1856} Maxwell indicated that the methodology he adopted was a modified version of the formal analogy invoked by Thomson \cite[`On the uniform motion of heat in homogeneous solid bodies, and its connexion with the mathematical theory of electricity', pp.\ 1--14]{Thomson2011}:

\vspace*{-0.2ex}
\begin{quote}\footnotesize
The method pursued in this paper is a modification of that mode of viewing electrical phenomena in relation to the theory of the uniform conduction of heat, which was first pointed out by Professor W.\ Thomson, [...] Instead of using the analogy of heat, a fluid, the properties of which are entirely at our disposal [i.e., purely imaginary], is assumed as the vehicle of mathematical reasoning.
\end{quote}
\vspace*{-0.3ex}

Maxwell actually stated that he would modify Thomson’s methodology of ana\-logy that relates two distinct physical domains via the same mathematical structure. He used Faraday’s concept of lines of force as processes of reasoning and introduced an imaginary imponderable and incompressible fluid that permeates a medium whose resistance is directly proportional to the velocity of the fluid. 
\hfill $\diamond$
\end{historicalremark}

We now recall the classical integration theorems, called gradient theorem, curl theorem (theorem of Stokes or Kelvin-Stokes theorem) and divergence theorem (theorem of Gauß or Ostrogradsky-Gauß theorem), and fix the notation for the integration of a function $f \colon M \to \R$ on a $k$-dimensional submanifold $M \subset \R^n$. For simplicity assume that $\Phi \colon U \to V$, with $U \subset \R^k$ open, is a chart of $M$ such that $\operatorname{supp} f \subset V$. Then with the Gramian $g(\mathbf{u}) \coloneqq \det(D\Phi(\mathbf{u})^\top D\Phi(\mathbf{u}))$ the integral is defined by
\begin{equation*}
   \int_M f(\mathbf{x}) \,dS(\mathbf{x})
   \coloneqq
   \int_U f(\Phi(\mathbf{u}))\sqrt{g(\mathbf{u})} \,d \mathbf{u}.
\end{equation*}
We follow common physics notation, omit the argument $\mathbf{x}$, i.e.\ we write $\int_M f \,dS$, and for $n=3$ notationally distinguish between the cases $k=1,2,3$ by writing $ds$, $dS$ and $dV$ instead of $dS$, respectively. Moreover, to further bridge the notational gap, we typeset vectorial line and surface elements in bold face as displayed and defined in the following theorem.

\begin{theorem}[Classical Integration Theorems]\label{thm:integration}
Let $\Omega \subset \R^3$ be open.
\medskip

\textbf{Gradient Theorem.}
Let $\gamma \colon [0,1] \to \Omega$ be piecewise continuously differentiable and $f \in C^1(\Omega, \R)$. Then
\begin{equation*}
   \int_0^1 \langle \operatorname{grad} f(\gamma(t)), \gamma'(t)\rangle dt
   =
   f(\gamma(1)) - f(\gamma(0)),
\end{equation*}
and with $\mathbf{p} = \gamma(0)$, $\mathbf{q} = \gamma(1)$, even shorter and more common in physics literature
\begin{equation*}
   \int_\gamma \operatorname{grad} f \cdot d\mathbf{s}
   =
   f(\mathbf{q}) - f(\mathbf{p}).
\end{equation*}
\medskip

\textbf{Curl Theorem.}
Let $M \subset \Omega$ be a bounded piecewise smooth oriented two-dimensional manifold with unit normal field $\nu : M \to \R^3$ and piecewise smooth boundary curve $\partial M$ with induced unit tangent field $\tau : \partial M \to \R^3$, and $\mathbf{a} \in C^1(\Omega, \R^3)$. Then
\begin{equation*}
   \int_M \langle \operatorname{curl} \mathbf{a}, \nu \rangle dS
   =
   \int_{\partial M} \langle \mathbf{a}, \tau \rangle ds,
\end{equation*}
and even shorter and more common in physics literature
\begin{equation*}
   \int_M \operatorname{curl} \mathbf{a} \cdot d\mathbf S
   =
   \int_{\partial M} \mathbf{a} \cdot d\mathbf s.
\end{equation*}
\medskip

\textbf{Divergence Theorem.}
Let $M \subset \Omega$ be bounded open with piecewise smooth boundary $\partial M$, $\nu : \partial M \to \R^3$ the outer unit normal field, and $\mathbf{b} \in C^1(\Omega, \R^3)$. Then
\begin{equation*}
   \int_M \operatorname{div} \mathbf{b} \,dV
   =
   \int_{\partial M} \langle \mathbf{b}, \nu \rangle \,dS,
\end{equation*}
and even shorter and more common in physics literature
\begin{equation*}
   \int_M \operatorname{div} \mathbf{b} \,dV
   =
   \int_{\partial M} \mathbf{b} \cdot d\mathbf S.
\end{equation*}
\end{theorem}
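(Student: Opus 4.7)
My plan is to reduce all three identities to the one-dimensional fundamental theorem of calculus applied on geometrically simple pieces, and then patch the pieces together with a partition of unity so that contributions along internal boundaries cancel. The three statements share this skeleton, differing only in dimension and in how the boundary data emerge from the parametrization. The \textbf{Gradient Theorem} is the base case and almost immediate: on each smooth subinterval of $\gamma$ the chain rule gives $(f\circ\gamma)'(t)=\langle \operatorname{grad} f(\gamma(t)),\gamma'(t)\rangle$, and integrating and summing over the finitely many smooth pieces telescopes to $f(\gamma(1))-f(\gamma(0))$ since the endpoint values agree at adjacent breakpoints.

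For the \textbf{Divergence Theorem} I would first establish the identity on an axis-aligned box $Q=[a_1,b_1]\times[a_2,b_2]\times[a_3,b_3]$. Fubini and the one-dimensional fundamental theorem give
\begin{equation*}
   \int_Q \partial_i b_i \, dV = \int_{Q_i^+} b_i \, dA - \int_{Q_i^-} b_i \, dA,
\end{equation*}
where $Q_i^\pm$ are the two faces normal to $\mathbf{e}_i$; the outer normal on $Q_i^\pm$ is $\pm \mathbf{e}_i$, so summing over $i=1,2,3$ produces exactly $\int_{\partial Q}\langle\mathbf{b},\nu\rangle\,dS$. The extension to a general piecewise-smooth bounded open set uses a partition of unity subordinate to a cover by such boxes together with ``normal domains'' lying between the graphs of two smooth functions; on overlaps the outer normals of neighboring patches are opposite, so the internal contributions cancel and only $\int_{\partial M}\langle \mathbf{b},\nu\rangle\,dS$ remains.

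For the \textbf{Curl Theorem} I would parametrize a patch of $M$ by a chart $\Phi\colon U\to V\subset M$ with vectorial surface element $\nu\,dS=(\partial_u\Phi\times\partial_v\Phi)\,du\,dv$. Expanding $\langle \operatorname{curl}\mathbf{a},\partial_u\Phi\times\partial_v\Phi\rangle$ and regrouping the terms by the order of differentiation, the resulting integrand over $U\subset\R^2$ takes the form required by Green's theorem in the plane, which then converts the pulled-back flux into the line integral of $\mathbf{a}$ around $\partial V$. A partition of unity on $M$ sums the local contributions; on shared edges between patches the induced orientations of the boundary curves are opposite, so those contributions cancel and only the line integral of $\mathbf{a}$ along $\partial M$ survives.

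The main obstacle is the orientation and regularity bookkeeping: verifying that the induced tangent $\tau$ on $\partial M$ in the curl theorem and the outer normal $\nu$ on $\partial M$ in the divergence theorem coincide with the fields produced by the chart computation, and that corners and other non-smooth points of $\gamma$, $M$, and $\partial M$ contribute only a measure-zero set which can be split off without affecting any of the integrals. A conceptually uniform alternative is to recognize all three identities as the special cases $k=1,2,3$ of the general Stokes theorem $\int_M d\omega = \int_{\partial M}\omega$ for differential forms on $\R^3$ (see e.g.\ \cite{Lee2012}), which settles all three cases simultaneously --- at the cost of leaving the vector-calculus notation that this paper deliberately favors.
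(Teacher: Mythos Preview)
Your proposal is correct. The sketch you give --- chain rule plus telescoping for the gradient theorem, Fubini on boxes then partition of unity for the divergence theorem, and pull-back to Green's theorem in the plane for the curl theorem --- is the standard elementary route and would work, with the orientation and piecewise-smoothness bookkeeping you flag being the only genuinely tedious part.

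The paper, however, does not prove this theorem at all: it simply states that the proofs can be found in many textbooks and refers to Spivak's \emph{Comprehensive Introduction to Differential Geometry} for a unified treatment via the general Stokes theorem for differential forms. This is precisely the ``conceptually uniform alternative'' you mention in your final paragraph (you cite \cite{Lee2012} rather than Spivak, but the content is the same). So your proposal in fact contains the paper's entire argument as a closing remark, while additionally supplying the elementary vector-calculus sketch that the paper chose to omit. The trade-off is exactly the one you identify: your direct approach stays within the vector-calculus notation that the paper deliberately favors, at the cost of having to do three separate computations and the orientation bookkeeping by hand; the differential-forms route handles all three at once but leaves that notational framework.
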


\begin{proof}
The proofs of these classical integration theorems can be found in many textbooks. We refer to \cite[Chapter 8, p.\ 261, Stokes' Theorem]{Spivak1970} for a unified proof of all three statements using differential forms.
\end{proof}

Whereas the Classical Integration Theorems \ref{thm:integration} allow for an integral interpretation of the differential operators $\operatorname{div}$, $\operatorname{curl}$ and $\operatorname{grad}$, the following theorem additionally also provides the converse implication.  

\begin{theorem}[Geometric interpretation of gradient, curl and divergence]\label{thm:geometric-vectorcalculus}
Let $\Omega \subset \R^3$ be non-empty and open.
\begin{itemize}
\item[(a)] Let $\Gamma \coloneqq \{ \gamma \colon [0,1] \to \Omega \,|\, \gamma \text{ is piecewise continuously differentiable}\}$. Then for each $f \in C^1(\Omega, \R)$ and $\mathbf{a} \in C^0(\Omega, \R^3)$ the following statements are equivalent:
\vspace*{-0.3ex}
\begin{itemize}
   \item[(i)] $\forall \gamma \in \Gamma
   \colon
   \int_\gamma \mathbf{a} \cdot d\mathbf{s} = f(\gamma(1)) - f(\gamma(0))$.

   \item[(ii)] $\mathbf{a} = \operatorname{grad} f$.
\end{itemize}

\item[(b)] Let $\mathcal{M}$ be the set of bounded oriented piecewise smooth two-dimensional submanifolds $M$ of $\Omega$ with piecewise smooth boundary $\partial M$. Then for each $\mathbf{a} \in C^1(\Omega, \R^3)$ and $\mathbf{b} \in C^0(\Omega, \R^3)$ the following statements are equivalent:
\vspace*{-0.3ex}
\begin{itemize}
   \item[(i)] $\forall M \in \mathcal{M}
   \colon
   \int_M \mathbf{b} \cdot d\mathbf{S} = \int_{\partial M} \mathbf{a} \cdot d\mathbf{s}$.

   \item[(ii)] $\mathbf{b} = \operatorname{curl} \mathbf{a}$.
\end{itemize}

\item[(c)] 
Let $\mathcal{M}$ be the set of bounded open subsets $M$ of $\Omega$ with piecewise smooth boundary. Then for each $\mathbf{b} \in C^1(\Omega, \R^3)$ and $g \in C^0(\Omega, \R)$ the following statements are equivalent:
\vspace*{-0.3ex}
\begin{itemize}
   \item[(i)] $\forall M \in \mathcal{M}
   \colon
   \int_M g \,dV = \int_{\partial M} \mathbf{b} \cdot d\mathbf{S}$.

   \item[(ii)] $g = \operatorname{div} \mathbf{b}$.
\end{itemize}
\end{itemize}
\end{theorem}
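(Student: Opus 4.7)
The plan is to note that in each of (a), (b), (c) the implication (ii) $\Rightarrow$ (i) is a direct restatement of the Gradient, Curl, and Divergence Theorem from Theorem~\ref{thm:integration}. The substance therefore lies in the converse, and I would treat (a), (b), (c) by a uniform \emph{shrinking test domain} strategy: combine (i) with the appropriate part of Theorem~\ref{thm:integration} to convert the hypothesis into an integral identity of the form ``$\int_\sigma (\text{continuous function}) = 0$ for every $\sigma$ in a large family'', and then localise by shrinking $\sigma$ around an arbitrary point.

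For (a), I would fix $\mathbf{x}_0 \in \Omega$ and a coordinate direction $\mathbf{e}_j$ and test (i) on the straight segment $\gamma_h(t) \coloneqq \mathbf{x}_0 + t h \mathbf{e}_j$, $t \in [0,1]$, for $h > 0$ small enough that the segment lies in $\Omega$. The left-hand side evaluates to $h \int_0^1 a_j(\mathbf{x}_0 + t h \mathbf{e}_j)\, dt$ and the right-hand side to $f(\mathbf{x}_0 + h \mathbf{e}_j) - f(\mathbf{x}_0)$. Dividing by $h$ and letting $h \to 0^+$, continuity of $a_j$ together with the definition of $\partial_j f$ gives $a_j(\mathbf{x}_0) = \partial_j f(\mathbf{x}_0)$, and doing this for $j = 1, 2, 3$ yields $\mathbf{a} = \operatorname{grad} f$.

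For (b), applying the Curl Theorem to the right-hand side of (i) rewrites the hypothesis as $\int_M (\mathbf{b} - \operatorname{curl} \mathbf{a}) \cdot d\mathbf{S} = 0$ for every $M \in \mathcal{M}$. For fixed $\mathbf{x}_0 \in \Omega$ and unit vector $\nu$ I would take $M = D_\varepsilon$ to be the planar disk of radius $\varepsilon$ centred at $\mathbf{x}_0$ and orthogonal to $\nu$, with $\varepsilon$ so small that $D_\varepsilon \subset \Omega$. Dividing by the area $\pi \varepsilon^2$ and passing to the limit $\varepsilon \to 0$, continuity of $\mathbf{b} - \operatorname{curl} \mathbf{a}$ gives $\langle \mathbf{b}(\mathbf{x}_0) - \operatorname{curl} \mathbf{a}(\mathbf{x}_0), \nu \rangle = 0$, and the arbitrariness of $\nu$ yields $\mathbf{b}(\mathbf{x}_0) = \operatorname{curl} \mathbf{a}(\mathbf{x}_0)$. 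Part (c) proceeds in complete parallel: the Divergence Theorem reduces the hypothesis to $\int_M (g - \operatorname{div} \mathbf{b})\, dV = 0$ for every bounded open $M$ with piecewise smooth boundary, and localising at $\mathbf{x}_0$ with open balls $B_\varepsilon(\mathbf{x}_0) \subset \Omega$, dividing by $\tfrac{4}{3}\pi \varepsilon^3$ and letting $\varepsilon \to 0$ recovers $g(\mathbf{x}_0) = \operatorname{div} \mathbf{b}(\mathbf{x}_0)$.

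The expected main obstacle is the bookkeeping in (b): verifying that the small flat disk $D_\varepsilon$ with prescribed unit normal $\nu$ really belongs to the admissible family $\mathcal{M}$, and that the averaged surface flux $\tfrac{1}{\pi \varepsilon^2} \int_{D_\varepsilon} \mathbf{F} \cdot d\mathbf{S}$ of a continuous vector field $\mathbf{F}$ on $\Omega$ converges as $\varepsilon \to 0$ to $\langle \mathbf{F}(\mathbf{x}_0), \nu \rangle$. The analogous volume averaging in (c) and the one-variable fundamental theorem of calculus argument in (a) are essentially automatic, but the surface case needs a careful choice of parametrisation and a uniform continuity estimate to justify passing to the limit.
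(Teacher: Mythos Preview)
Your proposal is correct and follows essentially the same route as the paper: the implications (ii)$\Rightarrow$(i) are exactly the integration theorems of Theorem~\ref{thm:integration}, and the converses are obtained by reducing to $\int(\mathbf{b}-\operatorname{curl}\mathbf{a})\cdot d\mathbf{S}=0$ (resp.\ $\int(g-\operatorname{div}\mathbf{b})\,dV=0$, and an analogous reduction for (a)) over discs, balls and segments, followed by a localisation argument. The only cosmetic difference is that the paper packages the localisation step into a separate Lemma~\ref{lem:det_by_int} and argues by contradiction---if the continuous integrand were nonzero at a point it would be strictly positive on a small disc/ball/segment, forcing a nonzero integral---rather than by your shrinking-average limit; both devices are standard and interchangeable, and your anticipated ``main obstacle'' (that the flat disc $D_\varepsilon$ lies in $\mathcal{M}$ and that the averaged flux converges to $\langle\mathbf{F}(\mathbf{x}_0),\nu\rangle$) is exactly what the paper's lemma handles.
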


For the proof of Theorem \ref{thm:geometric-vectorcalculus} we use the following Lemma.

\begin{lemma}\label{lem:det_by_int}
Let \(\Omega \subset \mathbb R^3\) be non-empty and open.
Let \(\mathbf v \in C^0(\Omega,\mathbb R^3)\) and \(g \in C^0(\Omega,\mathbb R)\).
The following statements hold:
\begin{itemize}
    \item[(a)]
    If for all paths \(\gamma \colon [0,1] \to \Omega,\, t \mapsto (1-t)\mathbf{x}_0 + t\mathbf{x}_1\) where \(\mathbf{x}_0,\mathbf{x}_1 \in \Omega\), \(\int_\gamma \mathbf v\cdot d\mathbf s = 0\), then \(\mathbf v = 0\).
\\[-1.6ex]
    \item[(b)]
    If for all discs \(D(\mathbf{x},\mathbf{n},r) \subset \Omega\) with center \(\mathbf{x} \in \Omega\), radius \(r > 0\), unit normal \(\mathbf{n} \in \R^3\) and the orientation induced by \(\mathbf{n}\), \(\int_{D(\mathbf{x},\mathbf{n},r)} \mathbf v\cdot d\mathbf S = 0\), then \(\mathbf v = 0\).
\\[-1.6ex]
    \item[(c)]
    If for all balls \(B(\mathbf{x},r) \subset \Omega\) with center \(\mathbf{x} \in \Omega\) and radius \(r > 0\),\linebreak \(\int_{B(\mathbf{x},r)} g\,dV = 0\), then \(g = 0\).
\end{itemize}
\end{lemma}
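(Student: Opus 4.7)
All three parts share the same logical shape, so the plan is a uniform contraposition-plus-continuity argument. Assume the pointwise conclusion fails at some $\mathbf{x}_0 \in \Omega$; continuity of $\mathbf{v}$ (respectively $g$) then yields an open neighbourhood of $\mathbf{x}_0$ on which the integrand, after a well-chosen orientation, stays above half of its value at $\mathbf{x}_0$. Restricting the integration element (segment, disc, or ball) to sit inside this neighbourhood forces the integral to have a definite sign, contradicting the assumed vanishing. In the two vector cases the canonical orientation is $\mathbf{n} := \mathbf{v}(\mathbf{x}_0)/|\mathbf{v}(\mathbf{x}_0)|$, which makes $\langle \mathbf{v}(\mathbf{x}_0),\mathbf{n}\rangle = |\mathbf{v}(\mathbf{x}_0)| > 0$ and thereby reduces the vector statements to the same scalar sign-preservation step used for $g$.

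\textbf{Carrying out the three cases.} For (c), no orientation is needed: assuming (without loss of generality) $g(\mathbf{x}_0) > 0$, I would pick $r > 0$ with $B(\mathbf{x}_0,r) \subset \Omega$ and $g \geq g(\mathbf{x}_0)/2$ on this ball, making $\int_{B(\mathbf{x}_0,r)} g\,dV$ strictly positive. For (b), I apply the strategy with $\mathbf{n}$ as above: on a small enough disc $D(\mathbf{x}_0,\mathbf{n},r)$ the continuous function $\mathbf{x}\mapsto\langle \mathbf{v}(\mathbf{x}),\mathbf{n}\rangle$ stays $\geq |\mathbf{v}(\mathbf{x}_0)|/2$, so the flux is bounded below by $\tfrac{|\mathbf{v}(\mathbf{x}_0)|}{2}\pi r^2 > 0$. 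For (a), I parametrise the straight segment by $\gamma(t) = \mathbf{x}_0 + \varepsilon t\,\mathbf{n}$ for a small $\varepsilon > 0$, chosen so that the image lies in $\Omega$ and $\langle \mathbf{v}\circ\gamma,\mathbf{n}\rangle \geq |\mathbf{v}(\mathbf{x}_0)|/2$ on $[0,1]$; the line integral then equals $\varepsilon\int_0^1 \langle \mathbf{v}(\gamma(t)),\mathbf{n}\rangle\,dt$ and is bounded below by $\tfrac{\varepsilon|\mathbf{v}(\mathbf{x}_0)|}{2} > 0$.

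\textbf{Expected difficulty.} I do not foresee a serious obstacle; the whole proof is a textbook sign-preservation routine and the three parts differ only in the geometry of the test domain. The single care point is choosing one parameter $\varepsilon$ (or $r$) that simultaneously keeps the integration domain inside $\Omega$ and inside the region where the integrand still has the correct sign, but openness of $\Omega$ together with continuity of the integrand makes this automatic by intersecting the two open conditions.
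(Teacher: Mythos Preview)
Your proposal is correct and follows essentially the same approach as the paper: contraposition, the choice $\mathbf{n}=\mathbf{v}(\mathbf{x}_0)/|\mathbf{v}(\mathbf{x}_0)|$ in the vector cases to reduce to a positive scalar integrand, and continuity to localise to a small segment, disc, or ball where the integral is strictly positive. The paper writes out only part (b) and uses mere positivity rather than your quantitative $\tfrac{1}{2}|\mathbf{v}(\mathbf{x}_0)|$ lower bound, but this is an inessential cosmetic difference.
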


\begin{proof}
We only prove (b).
Suppose to the contrary, that \(\mathbf v \neq 0\).
Then there is \(\mathbf x \in \Omega\) with \(\mathbf v(\mathbf x) \neq 0\).
We define \(\mathbf n \coloneqq \mathbf v(\mathbf x)/\Vert \mathbf v(\mathbf x)\Vert\).
Then the function \(g(\mathbf y) \coloneqq \langle \mathbf v(\mathbf y),\mathbf n\rangle\) is continuous in \(\Omega\) and satisfies \(g(\mathbf x) = \Vert \mathbf v(\mathbf x)\Vert > 0\).
Hence there is \(r > 0\), s.t.\ \(B(\mathbf x,r) \subseteq \Omega\) and for all \(\mathbf y \in B(\mathbf x,r)\) it holds that \(g(\mathbf y) > 0\).
Then the disk \(D(\mathbf x,\mathbf n,r) \coloneqq \{\mathbf y \in B(\mathbf x,r) \,\vert\, \langle \mathbf y-\mathbf x,\mathbf n\rangle = 0\}\) has unit normal field \(D(\mathbf x,\mathbf n,r) \ni \mathbf y \mapsto \mathbf n\) and can be parameterized almost globally by a parameterization \(\gamma:\dom\gamma\subset\R^2 \to D(\mathbf x,\mathbf n,r)\), where \(\dom\gamma\) has non-zero measure.
We compute
\begin{equation*}
    0 = \int_{D(\mathbf x,\mathbf n,r)} \mathbf v\cdot d\mathbf S = \int_{\dom\gamma}\langle \mathbf v(\gamma(s,t)),\mathbf n\rangle \,d(s,t) = \int_{\dom\gamma} g(\gamma(s,t)) \,d(s,t) > 0,
\end{equation*}
which is a contradiction.
\end{proof}

\begin{proof}[Proof of Theorem \ref{thm:geometric-vectorcalculus}]
The proposition \(\text{(a)}\textrm{(ii)}\Rightarrow\textrm{(i)}\) follows by an application of the Fundamental Theorem of Calculus and \(\text{(b)}\textrm{(ii)}\Rightarrow\textrm{(i)}\) respectively \(\text{(c)}\textrm{(ii)}\Rightarrow\textrm{(i)}\) follows from the Curl respectively the Divergence Theorem.
The converse implications \(\text{(a)}\textrm{(i)}\Rightarrow\textrm{(ii)}\), \(\text{(b)}\textrm{(i)}\Rightarrow\textrm{(ii)}\) and \(\text{(c)}\textrm{(i)}\Rightarrow\textrm{(ii)}\) can be deduced again from the Fundamental Theorem of Calculus, the Curl and the Divergence Theorem and Lemma \ref{lem:det_by_int}.
We provide the proof only for \(\text{(b)}\textrm{(i)}\Rightarrow\textrm{(ii)}\).
By an application of the Curl Theorem, we compute for \(M \in \mathcal M\),
\begin{equation*}
    \int_{M} \big(\mathbf b - \operatorname{curl}\mathbf a\big)\cdot d\mathbf S = 0.
\end{equation*}
Since all discs \(D(\mathbf{x},\mathbf{n},r) \subseteq \Omega\), with center \(\mathbf{x} \in \mathbb R^3\), unit normal \(\mathbf{n} \in \mathbb R^3\) and radius \(r > 0\) are elements of \(\mathcal M\), we deduce from Lemma \ref{lem:det_by_int} that \(\mathbf b - \operatorname{curl} \mathbf a = 0\).
\end{proof}

Combining Poincar\'{e}'s Lemma \ref{thm:poincare-lemma} with the geometric interpretation of gradient, curl and divergence in Theorem \ref{thm:geometric-vectorcalculus}, we obtain our main tool for the construction of Maxwell's equations.

\begin{lemma}[Geometric interpretation of Poincar\'{e}'s Lemma]\label{lem:geometric-poincare-lemma}
Let $\Omega \subset \R^3$ be open and star shaped.
\begin{itemize}
\item[(a)] For each $\mathbf{a} \in C^1(\Omega, \R^3)$
\begin{equation*}
   \operatorname{curl} \mathbf{a} = 0
   \quad
   \Leftrightarrow
   \quad
   \exists f \in C^1(\Omega, \R)
   \colon
   \text{\emph{Theorem \ref{thm:geometric-vectorcalculus}(a)(i) and (a)(ii) hold}}.
\end{equation*}

\item[(b)] For each $\mathbf{b} \in C^1(\Omega, \R^3)$
\begin{equation*}
   \operatorname{div} \mathbf{b} = 0
   \quad
   \Leftrightarrow
   \quad
   \exists \mathbf{a} \in C^1(\Omega, \R^3)
   \colon
   \text{\emph{Theorem \ref{thm:geometric-vectorcalculus}(b)(i) and (b)(ii) hold}}.
\end{equation*}

\item[(c)] For each $g \in C^0(\Omega, \R)$
\begin{equation*}
   \exists \mathbf{b} \in C^1(\Omega, \R^3)
   \colon
   \text{\emph{Theorem \ref{thm:geometric-vectorcalculus}(c)(i) and (c)(ii) hold}}.
\end{equation*}
\end{itemize}
\end{lemma}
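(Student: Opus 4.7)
The plan is to derive each of the three statements by chaining Poincar\'{e}'s Lemma \ref{thm:poincare-lemma} with the geometric interpretation Theorem \ref{thm:geometric-vectorcalculus}; both apply because $\Omega$ is star shaped, hence non-empty and open.

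For part (a), in the forward direction I would start from $\operatorname{curl}\mathbf{a} = 0$, invoke Theorem \ref{thm:poincare-lemma}(a) to obtain $f \in C^1(\Omega,\R)$ with $\mathbf{a} = \operatorname{grad} f$ (which is exactly Theorem \ref{thm:geometric-vectorcalculus}(a)(ii)), and then use the equivalence (a)(ii) $\Leftrightarrow$ (a)(i) in Theorem \ref{thm:geometric-vectorcalculus} to conclude that (a)(i) holds for the same $f$. The converse is immediate: if an $f$ witnesses both (a)(i) and (a)(ii), then (a)(ii) alone yields $\operatorname{curl}\mathbf{a} = \operatorname{curl}\operatorname{grad} f = 0$.

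Parts (b) and (c) follow exactly the same pattern, combining Theorem \ref{thm:poincare-lemma}(b) with Theorem \ref{thm:geometric-vectorcalculus}(b), and Theorem \ref{thm:poincare-lemma}(c) with Theorem \ref{thm:geometric-vectorcalculus}(c), respectively. Part (c) has only one direction to prove, since the statement asserts mere existence rather than an equivalence.

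No genuine obstacle arises; the only point of care is matching regularity, but the potentials produced by Poincar\'{e}'s Lemma automatically lie in $C^1$, which is precisely the regularity required by Theorem \ref{thm:geometric-vectorcalculus} to make sense of the integral identities in (a)(i), (b)(i) and (c)(i). The lemma thus functions essentially as a convenient repackaging of the two preceding results into the geometric-integral form that will be invoked in the subsequent construction of Maxwell's equations.
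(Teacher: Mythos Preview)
Your proposal is correct and matches the paper's approach exactly: the paper does not give a separate proof of this lemma but simply presents it as the combination of Poincar\'{e}'s Lemma (Theorem~\ref{thm:poincare-lemma}) with the geometric interpretation in Theorem~\ref{thm:geometric-vectorcalculus}, which is precisely the chaining you describe.
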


\section{Maxwell's Imaginary Fluid}

In this section we model Maxwell's imaginary fluid based on his original work and applications of the geometric interpretation of Poincar\'{e}'s Lemma \ref{lem:geometric-poincare-lemma}. All occurring quantities are assumed to be smooth enough for their integrals and derivatives to exist as noted.

\begin{historicalremark}[Maxwell's Imaginary Fluid and Tube of Fluid Motion]\label{rem:tube}
In \cite{Maxwell1858} Maxwell introduces a hypothetical fluid as a purely imaginary substance with a collection of imaginary properties. We quote from \cite[pp.\ 160--162]{MaxwellNiven1890}.

\begin{quote}\footnotesize
   (1) The substance here treated of must not be assumed to possess any of the properties of ordinary fluids 
   except those of freedom of motion and resistance to compression.   [$\dots$]
   
   \emph{The portion of fluid which at any instant occupied a given volume, will at any succeeding instant occupy an equal volume.}
   
   This law expresses the incompressibility of the fluid, and furnishes us with a convenient measure of its quantity, 
   namely its volume. The unit of quantity of the fluid will therefore be the unit of volume.
\end{quote}

\begin{quote}\footnotesize
   (2) The direction of motion of the fluid will in general be different at different points of the space which it occupies, but since the direction is determinate for every such point, we may conceive a line to begin at any point and to be continued so that every element of the line indicates by its direction the direction of motion at that point of space. Lines drawn in such a manner that their direction always indicates the direction of fluid motion are called lines of fluid motion. [$\dots$]
\end{quote}

\vspace*{-1ex}
\begin{figure}[!h]
\hspace*{5ex}
\includegraphics*[width=0.3\textwidth]{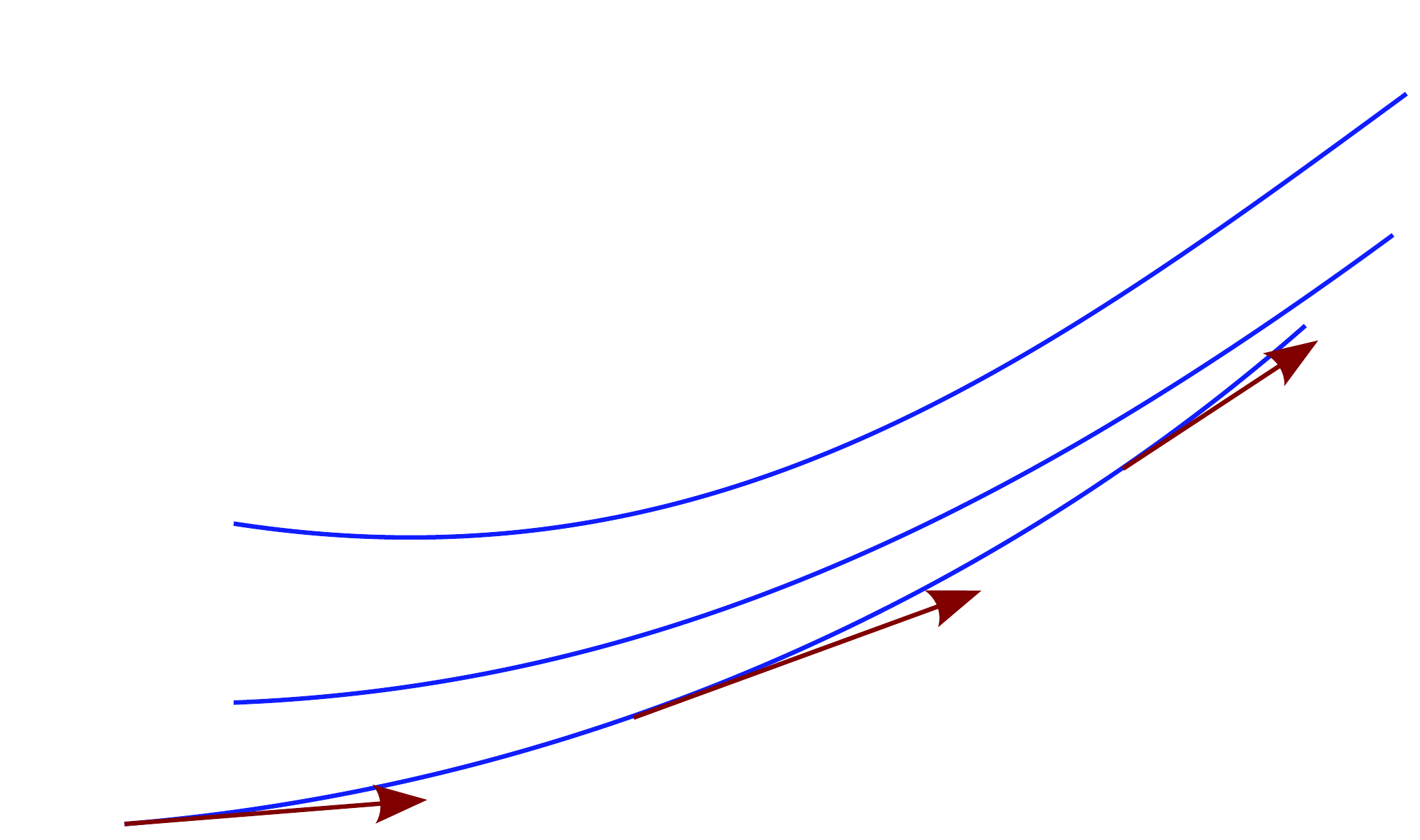}
\hspace*{9ex}
\includegraphics*[width=0.4\textwidth]{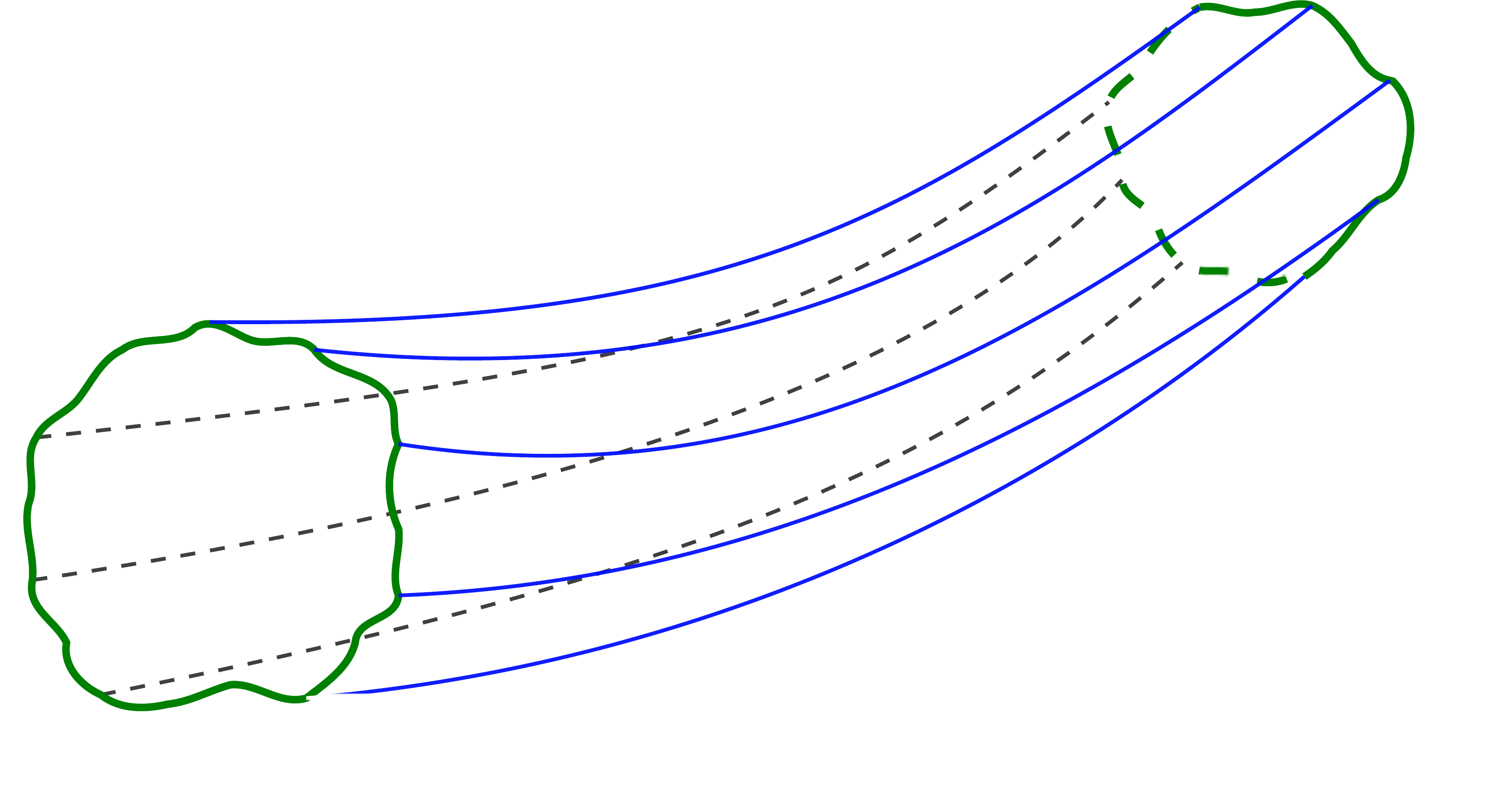}
\caption{Lines of fluid motion and tube of fluid motion.\label{fig:fluidmotion}}
\end{figure}

\begin{quote}\footnotesize
(3) If upon any surface which cuts the lines of fluid motion we draw a closed curve, and if from every point of this curve we draw a line of motion, these lines of motion will generate a tubular surface which we may call a tube of fluid motion. Since this surface is generated by lines in the direction of fluid motion no part of the fluid can flow across it, so that this imaginary surface is as impermeable to the fluid as a real tube.
\end{quote}

\begin{quote}\footnotesize
(7) [$\dots$] if the origin of the tube or its termination be within the space under consideration, then we must conceive the fluid to be supplied by a source within that space, capable of creating and emitting unity of fluid in unity of time, and to be afterwards swallowed up by a sink capable of receiving and destroying the same amount continually. [$\dots$]
\end{quote}
\vspace*{-3.3ex}\hfill $\diamond$
\end{historicalremark}

The imaginary fluid which Maxwell introduces in property (1) of the Historical Remark \ref{rem:tube} occupies volume in a domain $\Omega \subset \R^3$ and has, in particular, a velocity depending on time $t$ and described by a velocity field
\begin{equation*}
   \mathbf{v} : \R \times \Omega \to \R^3,
   \quad
   (t,\mathbf{x}) \mapsto \mathbf{v}(t,\mathbf{x}).
\end{equation*}
Let $S$ be an oriented surface, more precisely, a bounded piecewise smooth oriented two-dimensional manifold in $\Omega$. Then 
\begin{equation*}
   \int_{S} \mathbf{v}(t,\cdot) \cdot d \mathbf{S}
\end{equation*}
describes the flow at time $t$ through the surface $S$, also called the \emph{flux} through $S$. With this interpretation we rename $\mathbf{v}$ and call it from now on \emph{(fluid) flux density}, see Figure \ref{fig:flux}.

\begin{figure}[!h]\label{fig:flux}
\vspace*{-1ex}\hspace*{8ex}
\includegraphics*[width=0.7\textwidth]{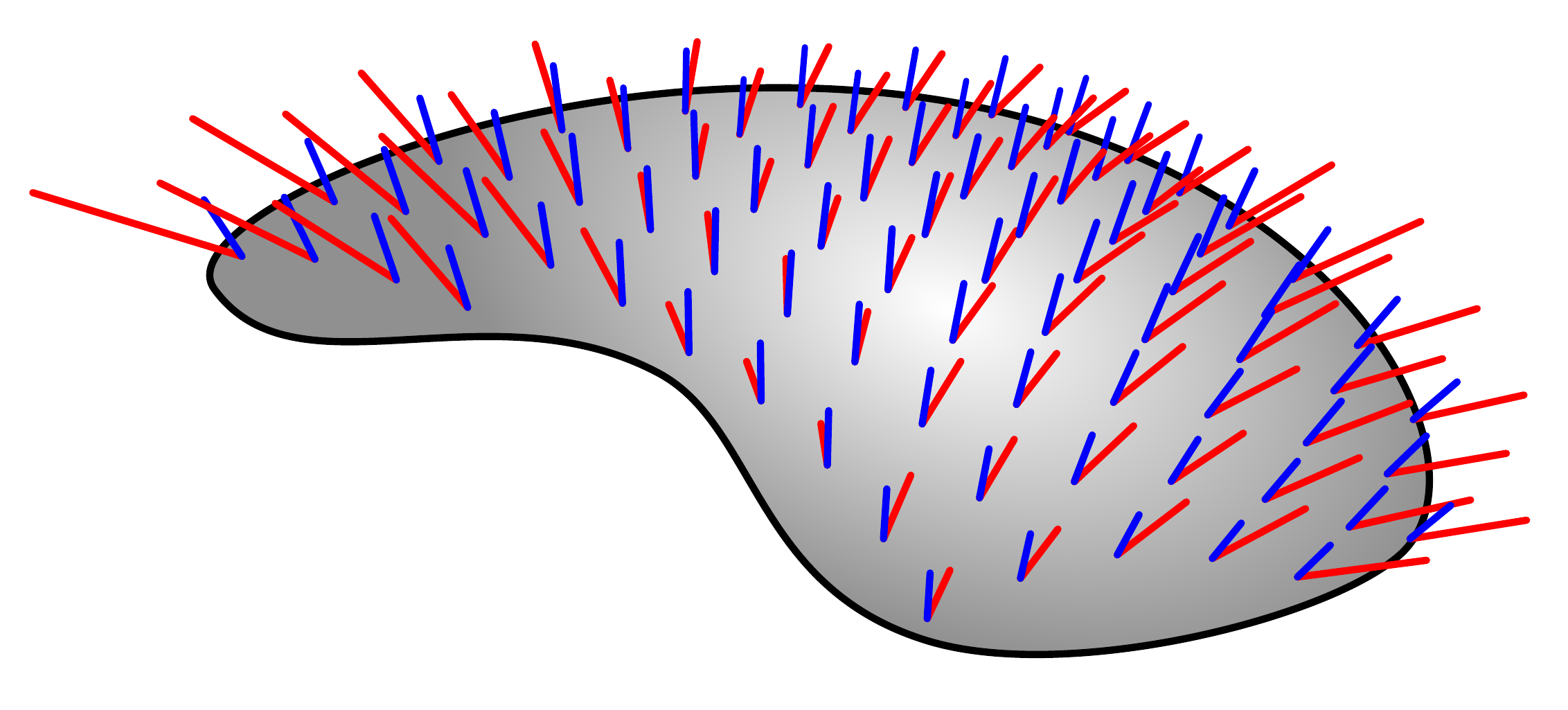}
\caption{Flux density $\mathbf{v}(t,\cdot)$ and unit normal field $\nu$ on surface $S$.}
\end{figure}

Maxwell's tube of fluid motion $T \subset \Omega$ as introduced in property (3) of the Historical Remark \ref{rem:tube} has a bounding surface $S = \partial T$ consisting of a lateral area and two base areas. The flux through the lateral area is zero, since it consists of lines of fluid motion and hence the total flux through the surface $S$ of $T$ is the sum of the fluxes through the two base areas. Maxwell introduces in property (7) of the Historical Remark \ref{rem:tube} sources and sinks, creating and swallowing up the imaginary fluid, respectively, on the base areas of the tube of fluid motion. Maxwell's imaginary fluid has therefore, in particular, a \emph{source density}
\begin{equation*}
   \varrho : \R \times \Omega \to \R,
   \quad
   (t,\mathbf{x}) \mapsto \varrho(t,\mathbf{x}),
\end{equation*}
which specifies by its integral over the tube of fluid motion
\begin{equation*}
   \int_T \varrho (t,\cdot) \,dV
\end{equation*}
the sources and sinks in $T$ which create or swallow up fluid at time $t$. Maxwell's imaginary fluid is incompressible and at any time $t$ the fluid produced by the sources and sinks in $T$ has to equal the flux of the fluid through $S$, i.e.\ the balance law
\begin{equation}\label{eq:balance}
   \int_T \varrho (t,\cdot) \,dV
   =
   \int_{S} \mathbf{v}(t,\cdot) \cdot d \mathbf{S}
\end{equation}
holds. Maxwell uses in \cite{Maxwell1858} the mental imagery of a tube of fluid motion to describe and develop properties of the imaginary fluid. This and various other aspects of the development of electrodynamic theory are discussed in detail in many good papers and books, see e.g.\ \cite{Siegel1991} and the references therein. We leave now Maxwell's original considerations from Remark \ref{rem:tube} but continue in this section to describe his imaginary fluid as an analogy for the development of Maxwell's equations in the next section.

We start with a further simplification of notation by omitting the $t$-dependence of vector fields, i.e.\ we write e.g.\ $\int \mathbf{v} \cdot d \mathbf{S}$ instead of $\int \mathbf{v}(t,\cdot) \cdot d \mathbf{S}$. It is understood that integrands and therefore integral expressions with abbreviated notation do depend on time although $t$ is not explicitly written. Moreover, if an operator $\operatorname{div}$, $\operatorname{curl}$ or $\operatorname{grad}$ is applied to a function which is defined on $\mathbb{R} \times \Omega$, then the partial derivatives involved are only with respect to the space variables in $\Omega$, e.g.\ $\operatorname{div} \mathbf{v}$ is short for $(t,\mathbf{x}) \mapsto \operatorname{div} \mathbf{v}(t,\cdot)|_{\mathbf{x}}$.

The considerations which lead to the balance law \eqref{eq:balance} of Maxwell's imaginary incompressible fluid suggest that it should not only hold for integration domains which are tubes of fluid motion but for each bounded open subset $M \subset \Omega$ with piecewise smooth boundary $\partial M$
\begin{equation}\label{eq:balance-law}
   \int_M \varrho \,dV
   =
   \int_{\partial M} \mathbf{v} \cdot d \mathbf{S}
\end{equation}
ensuring that the fluid produced in $M$ equals the flux through $\partial M$. By the Divergence Theorem \ref{thm:integration}(c)
\begin{equation*}
   \int_M \operatorname{div} \mathbf{v} \,dV
   =
   \int_{\partial M} \mathbf{v} \cdot d\mathbf S
\end{equation*}
and hence the integral $\int_M (\varrho - \operatorname{div} \mathbf{v}) \,dV$ vanishes for each bounded open subset $M \subset \Omega$ with piecewise smooth boundary. As a consequence of Lemma \ref{lem:det_by_int}(c)
\begin{equation*}
   \operatorname{div} \mathbf{v}
   =
   \varrho.
\end{equation*}
So far we have established for Maxwell's imaginary fluid (which has the unit $\textrm{m}^3$ of volume) the properties \emph{(fluid) flux} described by
\vspace*{1ex}
\begin{tcolorbox}[ams align*]
   \begin{aligned}
      \mathbf{v} : \R \times \Omega &\to \R^3,
   \\
      (t,x) &\mapsto \mathbf{v}(t,x)
   \end{aligned}
   \qquad
   \quad
   \text{(\emph{flux density} with unit $[\mathbf{v}] = \tfrac{\textrm{m}}{\textrm{s}}$)}
\end{tcolorbox}
\noindent
and \emph{sinks and sources} described by
\vspace*{1ex}
\begin{tcolorbox}[ams align*]
   \begin{aligned}
      \varrho : \R \times \Omega &\to \R,
   \\
      (t,x) &\mapsto \varrho(t,x)
   \end{aligned}
   \qquad\quad
   \text{(\emph{source density} with unit $[\varrho] = \tfrac{\textrm{m}^3}{\textrm{m}^3 \textrm{s}} = \tfrac{1}{\textrm{s}}$)}
\end{tcolorbox}
\noindent
which are linked by the balance law \eqref{eq:balance-law}, also called \emph{Continuity Equation for the Fluid Flux}, that equates for each bounded open $M \subset \Omega$ with piecewise smooth boundary the sources in $M$ with the flux through $\partial M$
\begin{tcolorbox}[ams align*, title=Continuity Equation for the Fluid Flux (integral form)]
   \begin{aligned}
   \int_M \varrho \,dV
   &=
   \int_{\partial M} \mathbf{v} \cdot d \mathbf{S}
   \qquad\quad
   \text{(with unit $\tfrac{\textrm{m}^3}{s}$)}
   \\
   \text{(\emph{sources in $M$}}
   &=
   \text{\emph{flux through $\partial M$}),}
   \end{aligned}
\end{tcolorbox}
\noindent
or equivalently in differential form
\vspace*{1ex}
\begin{tcolorbox}[ams align*, title=Continuity Equation for the Fluid Flux (differential form)]
   \begin{aligned}
   \varrho
   &=
   \operatorname{div} \mathbf{v}
   \qquad\quad
   \text{(with unit $\tfrac{1}{s}$)}
   \\
   \text{(\emph{source density}}
   &=
   \text{\emph{divergence of flux density}).}
   \end{aligned}
\end{tcolorbox}
\noindent
The source density $\varrho$ and hence also the sources $\int_M \varrho \,dV$ in a bounded open $M \subset \Omega$ with piecewise smooth boundary depend on time $t$. The time derivative $\partial_t \int_M \varrho \,dV = \int_{M} \partial_t \varrho \,dV$ of the sources in $M$ describes the production rate of sources in $M$. The geometric interpretation of Poincar\'{e}'s Lemma \ref{lem:geometric-poincare-lemma}(c), applied to $-\partial_t \varrho$, yields a vector field $\mathbf{j} : \R \times \Omega \to \mathbb{R}^3$ with $\int_{M} \partial_t \varrho \,dV = -\int_{\partial M} \mathbf{j} \cdot d \mathbf{S}$. A positive sign of $\int_{M} \partial_t \varrho \,dV$ means that over time there are more sources in $M$, consequently $\int_{\partial M} \mathbf{j} \cdot d \mathbf{S}$ describes a current of sources from inside of $M$ across the boundary $\partial M$. We therefore have established a \emph{current of sources} described by
\vspace*{1ex}
\begin{tcolorbox}[ams align*]
   \begin{aligned}
      \mathbf{j} : \R \times \Omega &\to \R^3,
   \\
      (t,x) &\mapsto \mathbf{j}(t,x)
   \end{aligned}
   \qquad
   \quad
   \text{(\emph{source current density} with unit $[\mathbf{j}] = \tfrac{\textrm{m}}{\textrm{s}^2}$)}
\end{tcolorbox}
\noindent
which is linked to the \emph{source production rate density} $\partial_t \varrho$ by a balance law that we call \emph{Continuity Equation for the Source Current} and that equates for each bounded open $M \subset \Omega$ with piecewise smooth boundary the source production rate in $M$ with the source current through $\partial M$
\vspace*{1ex}
\begin{tcolorbox}[ams align*, title=Continuity Equation for the Source Current (integral form)]
   \begin{aligned}
   \int_M \partial_t \varrho \,dV
   &=
   - \int_{\partial M} \mathbf{j} \cdot d \mathbf{S}
   \qquad\quad
   \text{(with unit $\tfrac{\textrm{m}^3}{s^2}$)}
   \\
   \text{(\emph{source production rate in $M$}}
   &=
   - \text{\emph{source current through $\partial M$}),}
   \end{aligned}
\end{tcolorbox}
\noindent
or equivalently in differential form
\begin{tcolorbox}[ams align*, title=Continuity Equation for the Source Current (differential form)]
   \begin{aligned}
   \partial_t \varrho
   &=
   - \operatorname{div} \mathbf{j}
   \qquad\quad
   \text{(with unit $\tfrac{1}{s^2}$)}
   \\
   \text{(\emph{source production}}
   &=
   - \text{\emph{divergence of}}
   \\[-1ex]
   \text{\emph{rate density}}
   &
   \hspace*{5ex}
   \text{\emph{source current density}).}
   \end{aligned}
\end{tcolorbox}
\noindent
Note that at this stage of our discussion of the properties of Maxwell's imaginary fluid only the divergence of the source current density $\mathbf{j}$ is uniquely determined and $\mathbf{j}$ plus an arbitrary divergence-free vector field would also be a possible source current density of the imaginary fluid.

We aim now for an application of the geometric interpretation of Poincar\'{e}'s Lemma \ref{lem:geometric-poincare-lemma}(b) to $\partial_t \mathbf{v} + \mathbf{j}$, which is possible because it is divergence-free
\begin{equation*}
   \operatorname{div}
   (\partial_t \mathbf{v} + \mathbf{j})
   =
   \partial_t \operatorname{div} \mathbf{v}
   +
   \operatorname{div} \mathbf{j}
   =   
   \partial_t \varrho - \partial_t \varrho
   = 0.
\end{equation*}
The geometric interpretation of Poincar\'{e}'s Lemma \ref{lem:geometric-poincare-lemma}(b) yields therefore a vector field $\mathbf{h} : \R \times \Omega \to \R^3$ with $\partial_t \mathbf{v} + \mathbf{j} = \operatorname{curl} \mathbf{h}$ and the equivalent geometric interpretation that for each bounded piecewise smooth oriented two-dimensional submani\-fold $M$ of $\Omega$ with piecewise smooth boundary $\partial M$ the balance law $\int_M (\partial_t \mathbf{v} + \mathbf{j}) \cdot d\mathbf{S} = \int_{\partial M} \mathbf{h} \cdot d\mathbf{s}$ holds. It is insightful to explore this balance law for the two special cases (i) $\partial_t \mathbf{v} = 0$ and (ii) $\mathbf{j} = 0$, and $M$ being a surface bounded by a closed curve (visualize $M$ as a disk bounded by a circle $\partial M$). In case (i) of a stationary fluid flux density $\mathbf{v}$ the source current $\int_{M} \mathbf{j} \cdot d \mathbf{S}$ through $M$ equals the line integral $\int_{\partial M} \mathbf{h} \cdot d\mathbf{s}$ of the vector field $\mathbf{h}$ along the closed boundary curve $\partial M$. In case (ii) of a vanishing source current it is the time change $\partial_t \int_M \mathbf{v} \cdot d\mathbf{S}$ of the flux of the imaginary fluid through $M$ which equals $\int_{\partial M} \mathbf{h} \cdot d\mathbf{s}$. From that perspective, the sum of the source current and the flux change rate of the imaginary fluid through the surface $M$ are described by the path integral $\int_{\partial M} \mathbf{h} \cdot d\mathbf{s}$ of $\mathbf{h}$ along the boundary curve $\partial M$. 
We call $\mathbf{h}$ the \emph{fluid field} and $\int_{\partial M} \mathbf{h} \cdot d\mathbf{s}$ the \emph{circulation of the fluid field} along $\partial M$, see Figure \ref{fig:circ}.

\begin{figure}[!h]
\vspace*{-3ex}\hspace*{17ex}
\includegraphics*[width=0.5\textwidth]{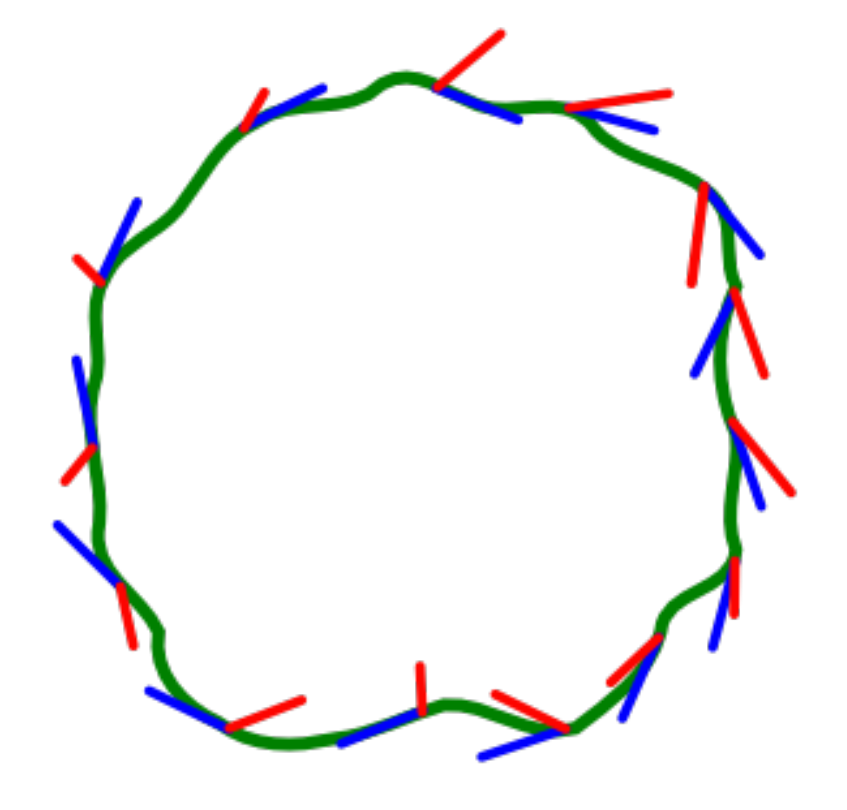}
\vspace*{-2ex}
\caption{Fluid field $\mathbf{h}(t,\cdot)$ and unit tangent field on curve $\partial M$.\label{fig:circ}}
\end{figure}

We have now established
\vspace*{1ex}
\begin{tcolorbox}[ams align*]
   \begin{aligned}
      \mathbf{h} : \R \times \Omega &\to \R^3,
   \\
      (t,x) &\mapsto \mathbf{h}(t,x)
   \end{aligned}
   \qquad
   \quad
   \text{(\emph{fluid field} with unit $[\mathbf{h}] = \tfrac{\textrm{m}^2}{\textrm{s}^2}$)}
\end{tcolorbox}
\noindent
and the change rate $\partial_t \int_M \mathbf{v} \cdot d\mathbf{S}$ of the flux of the imaginary fluid through a surface $M$ as an additive contribution to the source current $\int_{M} \mathbf{j} \cdot d \mathbf{S}$ through $M$. We call $\int_M \partial_t \mathbf{v} \cdot d\mathbf{S}$ the \emph{flux change current} and also for later reference when the imaginary fluid analogy is applied to electricity and magnetism \emph{displacement current}. It is linked to the fluid field $\mathbf{h}$ by a balance law, that we call \emph{Circulation Law for the Fluid Field}, or short \emph{Circulation Law}, and that equates for each bounded piecewise smooth oriented two-dimensional submanifold $M$ of $\Omega$ with piecewise smooth boundary $\partial M$ the sum of the flux change current and the source current through $M$ with the circulation of the fluid field along $\partial M$
\vspace*{1ex}
\begin{tcolorbox}[ams align*, title=Circulation Law (integral form)]
   \begin{aligned}
   \int_M (\partial_t \mathbf{v} + \mathbf{j}) \cdot d\mathbf{S}
   &=
   \int_{\partial M} \mathbf{h} \cdot d\mathbf{s}
   \qquad\quad
   \text{(with unit $\tfrac{\textrm{m}^3}{\textrm{s}^2}$)}
   \\
   \text{(\emph{flux change current $+$\hspace*{3.6ex}}}
   &\phantom{=}
   \\[-1ex]
   \text{\emph{source current through $M$}}
   &=
   \text{\emph{circulation of fluid field along $\partial M$}),}
   \end{aligned}
\end{tcolorbox}
\noindent
or equivalently in differential form
\vspace*{1ex}
\begin{tcolorbox}[ams align*, title=Circulation Law (differential form)]
   \begin{aligned}
   \partial_t \mathbf{v} + \mathbf{j}
   &=
   \operatorname{curl} \mathbf{h}
   \qquad\quad
   \text{(with unit $\tfrac{\textrm{m}}{\textrm{s}^2}$)}
   \\
   \text{(\emph{flux change current $+$\hspace*{1.6ex}}}
   &\phantom{=}
   \\[-1ex]
   \text{\emph{source current densities}}
   &=
   \text{\emph{curl of fluid field}).}
   \end{aligned}
\end{tcolorbox}
\noindent
Note that only the curl of the fluid field $\mathbf{h}$ is uniquely determined and $\mathbf{h}$ plus an arbitrary vector field with vanishing curl would also be a possible fluid field of the imaginary fluid.

\begin{remark}[Interpretation of Circulation Law for Tube of Fluid Motion]\label{rem:cirulationlaw}
For a bounded piecewise smooth oriented two-dimensional submanifold $M$ of $\Omega$ with piecewise smooth boundary $\partial M$ both terms $\int_M \partial_t \mathbf{v} \cdot d\mathbf{S}$ and $\int_M \mathbf{j} \cdot d\mathbf{S}$ with positive sign contribute to an increase of fluid ``on the other side of $M$'', an increase on the side to which the unit normal field of $M$ is pointing. To state this more precisely, consider two cases (i) $M$ is the closed surface of a tube $T$ of fluid motion with outer unit normal field and empty boundary $\partial M$, and case (ii) where $M$ is a base area of a tube $T$ of fluid motion with induced orientation, cp.\ also Figure \ref{fig:fluidmotion}. In case (i), using the fact that an integral over the empty set $\partial M$ equals zero, the Circulation Law becomes $\partial_t \int_M \mathbf{v} \cdot d\mathbf{S} = - \int_M \mathbf{j} \cdot d\mathbf{S}$ with the interpretation that an instantaneous increase of the flux out of $T$ is balanced by a current of sources into $T$. In case (ii) a positive source current $\int_M \mathbf{j} \cdot d\mathbf{S}$ through $M$ out of $T$ contributes to a decrease of fluid production in $T$ by a decrease of sources in $T$. A positive term $\int_M \partial_t \mathbf{v} \cdot d\mathbf{S} = \partial_t \int_M \mathbf{v} \cdot d\mathbf{S}$ can be interpreted as an increase of the flux from inside of $T$ through $M$ out of $T$, because $M$ has the induced orientation of $T$ with an outer unit normal field. The circulation $\int_{\partial M} \mathbf{h} \cdot d\mathbf{s}$ of the fluid field $\mathbf{h}$ along $\partial M$ equals the sum of both terms.
\hfill $\diamond$
\end{remark}

\section{Electricity and Magnetism}

\begin{historicalremark}[Maxwell's Imaginary Fluid Analogy applied to Electricity and Magnetism]\label{rem:electromagnetism}
In \cite{Maxwell1858} Maxwell prepared the application of the imaginary fluid analogy to electricity and magnetism. We quote from \cite[pp.\ 175, 178]{MaxwellNiven1890}.

\begin{quote}\footnotesize
\centerline{\emph{Application of the Idea of Lines of Force.}}
\vspace*{0.2ex}
I have now to shew how the idea of lines of fluid motion as described above may be modified so as to be applicable to the sciences of statical electricity, permanent magnetism, magnetism of induction, and uniform galvanic currents, reserving the laws of electro-magnetism for special consideration. $[\dots]$
\end{quote}

\begin{quote}\footnotesize
Now we found in (18) that the velocity of our imaginary fluid due to a source $S$ at a distance $r$ varies inversely as $r^2$. Let us see what will be the effect of substituting such a source for every particle of positive electricity. $[\dots]$
\end{quote}

\begin{quote}\footnotesize
\centerline{\emph{Theory of Permanent Magnets.}}
\vspace*{1ex}
A magnet is conceived to be made up of elementary magnetized particles, each of which has its own north and south poles, the action of which upon other north and south poles is governed by laws mathematically identical with those of electricity. Hence the same application of the idea of lines of force can be made to this subject, and the same analogy of fluid motion can be employed to illustrate it.
\end{quote}
\end{historicalremark}

We follow Maxwell and apply the analogy of the imaginary fluid to electricity and magnetism. The experimentally observed absence of magnetic monopoles is reflected by setting the \emph{magnetic charge density} $\varrho^{\mathrm{m}}$ and the \emph{magnetic current density} $\mathbf{j}^{\textrm{m}}$ equal to zero. The following table shows the correspondence between concepts for the imaginary fluid and their counterparts for electricity and magnetism.
\medskip

\begin{tabular}{ccc} 
\\[-1ex]
\toprule
   \multicolumn{1}{c}{{Imaginary Fluid}} & \multicolumn{2}{c}{{Electric and Magnetic Concepts}}
   \\ 
   \cmidrule(r){1-1}  \cmidrule(l){2-3}
   \\[-1.7ex]
   \emph{fluid source density} & \emph{electric charge density} & \emph{magnetic charge density}
   \\ \cmidrule(r){1-1}  \cmidrule(lr){2-2} \cmidrule(l){3-3}
\\[-1.9ex]
   $\begin{aligned} 
   \varrho : \R \times \Omega &\to \R,
   \\
   (t,x) &\mapsto \varrho(t,x)
   \end{aligned}$
   &
   $\begin{aligned} 
   \varrho : \R \times \Omega &\to \R,
   \\
   (t,x) &\mapsto \varrho(t,x)
   \end{aligned}$
   &
   $\begin{aligned} 
   \varrho^{\mathrm{m}} : \R \times \Omega &\to \R,
   \\
   (t,x) &\mapsto 0
   \end{aligned}$
\\[2ex]
   with unit $\tfrac{1}{\textrm{s}}$ &
   with unit $\tfrac{\textrm{C}}{\textrm{m}^3}$ &
   with unit $\tfrac{\textrm{T}}{\textrm{m}}$
\\
\midrule
   \\[-1.7ex]
   \emph{fluid flux density} & \emph{electric flux density} & \emph{magnetic flux density}
   \\ \cmidrule(r){1-1}  \cmidrule(lr){2-2} \cmidrule(l){3-3}
\\[-1.9ex]
   $\begin{aligned} 
   \mathbf{v} : \R \times \Omega &\to \R^3,
   \\
   (t,x) &\mapsto \mathbf{v}(t,x)
   \end{aligned}$
   &
   $\begin{aligned} 
   \mathbf{D} : \R \times \Omega &\to \R^3,
   \\
   (t,x) &\mapsto \mathbf{D}(t,x)
   \end{aligned}$
   &
   $\begin{aligned} 
   \mathbf{B} : \R \times \Omega &\to \R^3,
   \\
   (t,x) &\mapsto \mathbf{B}(t,x)
   \end{aligned}$
\\[2ex]
   with unit $\tfrac{\textrm{m}}{\textrm{s}}$ &
   with unit $\tfrac{\textrm{C}}{\textrm{m}^2}$ &
   with unit $\textrm{T}$
\\
\midrule
   \\[-1.7ex]
   \emph{source current density} & \emph{electric current density} & \emph{magnetic current density}
   \\ \cmidrule(r){1-1}  \cmidrule(lr){2-2} \cmidrule(l){3-3}
\\[-1.9ex]
   $\begin{aligned} 
      \mathbf{j} : \R \times \Omega &\to \R^3,
   \\
      (t,x) &\mapsto \mathbf{j}(t,x)
   \end{aligned}$
   &
   $\begin{aligned} 
      \mathbf{j} : \R \times \Omega &\to \R^3,
   \\
      (t,x) &\mapsto \mathbf{j}(t,x)
   \end{aligned}$
   &
   $\begin{aligned} 
      \mathbf{j}^{\textrm{m}} : \R \times \Omega &\to \R^3,
   \\
      (t,x) &\mapsto 0
   \end{aligned}$
\\[2ex]
   with unit $\tfrac{\textrm{m}}{\textrm{s}^2}$ &
   with unit $\tfrac{\textrm{A}}{\textrm{m}^2}$ &
   with unit $\tfrac{\textrm{T}}{\textrm{s}}$
\\
\midrule
   \\[-1.7ex]
   \emph{fluid field} & \emph{magnetic field} & \emph{electric field}
   \\ \cmidrule(r){1-1}  \cmidrule(lr){2-2} \cmidrule(l){3-3}
\\[-1.9ex]
   $\begin{aligned} 
      \mathbf{h} : \R \times \Omega &\to \R^3,
   \\
      (t,x) &\mapsto \mathbf{H}(t,x)
   \end{aligned}$
   &
   $\begin{aligned} 
      \mathbf{H} : \R \times \Omega &\to \R^3,
   \\
      (t,x) &\mapsto \mathbf{H}(t,x)
   \end{aligned}$
   &
   $\begin{aligned} 
      \mathbf{E} : \R \times \Omega &\to \R^3,
   \\
      (t,x) &\mapsto \mathbf{E}(t,x)
   \end{aligned}$
\\[2ex]
   with unit $\tfrac{\textrm{m}^2}{\textrm{s}^2}$ &
   with unit $\tfrac{\textrm{A}}{\textrm{m}}$ &
   with unit $\tfrac{\textrm{V}}{\textrm{m}}$
\\
\bottomrule
\end{tabular}
\medskip

Note that by convention the electric field $\mathbf{E}$ does not correspond to the fluid field $\mathbf{h}$ but to its negative $-\mathbf{h}$, as can be seen below in the analog of the Circulation Law for the Fluid Field (Faraday's Law). The names for $\mathbf{H}$, $\mathbf{D}$, $\mathbf{B}$ and $\mathbf{E}$ emphasize the correspondence to the imaginary fluid analogy. Whereas $\mathbf{E}$ is consistently named \emph{electric field} in the literature, also other names are used for $\mathbf{H}$, $\mathbf{D}$ and $\mathbf{B}$ as listed in the following table, see also \cite{FumeronEtAl2020}.

\medskip
\begin{tabular}{ll} 
   \toprule
   $\mathbf{H}$ & Alternative denominations for \emph{\textbf{magnetic field}}
   \\
   \cmidrule(l){2-2}
   \\[-2.3ex]
   & \emph{magnetic field intensity}, \emph{magnetic field strength}, \emph{magnetizing force}
   \\
   \midrule
   \\[-1.9ex]
   $\mathbf{B}$ & Alternative denominations for \emph{\textbf{magnetic flux density}}
   \\
   \cmidrule(l){2-2}
   \\[-2.3ex]
   & \emph{magnetic induction field}, \emph{magnetic field}
   \\   \midrule
   \\[-1.9ex]
   $\mathbf{D}$ & Alternative denominations for \emph{\textbf{electric flux density}}
   \\
   \cmidrule(l){2-2}
   \\[-2.3ex]
   & \emph{electric displacement field}, \emph{electric induction}
   \\
\bottomrule
\end{tabular}
\medskip

The Continuity Equation for the Fluid Flux of the imaginary fluid now becomes \emph{Gauss' Law} for the electric flux and the \emph{Magnetic Flux Continuity}, respectively. In integral form they hold for each bounded open $M \subset \Omega$ with piecewise smooth boundary $\partial M$.
\vspace*{1ex}
\begin{tcolorbox}[ams align*, title=Gauss' Law (integral form)]
   \begin{aligned}
   \int_M \varrho \,dV
   &=
   \int_{\partial M} \mathbf{D} \cdot d \mathbf{S}
   \qquad\quad
   \text{(with unit $\textrm{C}$)}
   \\
   \text{(\emph{electric charge in $M$}}
   &=
   \text{\emph{electric flux through $\partial M$}),}
   \end{aligned}
\end{tcolorbox}
\noindent
\begin{tcolorbox}[ams align*, title=Magnetic Flux Continuity (integral form)]
   \begin{aligned}
   0
   &=
   \int_{\partial M} \mathbf{B} \cdot d \mathbf{S}
   \qquad\quad
   \text{(with unit $\textrm{T}\textrm{m}^2$)}
   \\
   \text{(\emph{magnetic charge in $M$}}
   &=
   \text{\emph{magnetic flux through $\partial M$}),}
   \end{aligned}
\end{tcolorbox}
\noindent
or equivalently in differential form
\vspace*{1ex}
\begin{tcolorbox}[ams align*, title=Gauss' Law (differential form)]
   \begin{aligned}
   \varrho
   &=
   \operatorname{div} \mathbf{D}
   \qquad\quad
   \text{(with unit $\tfrac{\textrm{C}}{\textrm{m}^3}$)}
   \\
   \text{(\emph{electrical charge density}}
   &=
   \text{\emph{divergence of electric flux density}),}
   \end{aligned}
\end{tcolorbox}
\noindent
\begin{tcolorbox}[ams align*, title=Magnetic Flux Continuity (differential form)]
   \begin{aligned}
   0
   &=
   \operatorname{div} \mathbf{B}
   \qquad\quad
   \text{(with unit $\tfrac{\textrm{T}}{\textrm{m}}$)}
   \\
   \text{(\emph{magnetic charge density}}
   &=
   \text{\emph{divergence of magnetic flux density}).}
   \end{aligned}
\end{tcolorbox}
\noindent
The Continuity Equation for the Source Current of the imaginary fluid becomes the \emph{Continuity Equation for the Electric Current}. The magnetic current is assumed to vanish. In integral form it holds for each bounded open $M \subset \Omega$ with piecewise smooth boundary $\partial M$.
\vspace*{1ex}
\begin{tcolorbox}[ams align*, title=Continuity Equation for the Electric Current (integral form)]
   \begin{aligned}
   \int_M \partial_t \varrho \,dV
   &=
   - \int_{\partial M} \mathbf{j} \cdot d \mathbf{S}
   \qquad\quad
   \text{(with unit $\textrm{A}$)}
   \\
   \text{(\emph{charge production rate in $M$}}
   &=
   - \text{\emph{electric current through $\partial M$}),}
   \end{aligned}
\end{tcolorbox}
\noindent
or equivalently in differential form
\vspace*{1ex}
\begin{tcolorbox}[ams align*, title=Continuity Equation for the Electric Current (differential form)]
   \begin{aligned}
   \partial_t \varrho
   &=
   - \operatorname{div} \mathbf{j}
   \qquad\quad
   \text{(with unit $\tfrac{\textrm{A}}{\textrm{m}^3}$)}
   \\
   \text{(\emph{electric charge}\hspace*{8.5ex}}
   &\phantom{=}\hspace*{1.5ex}
   - \text{\emph{divergence of}}
   \\[-0.8ex]
   \text{\emph{production rate density}}
   &=
   \text{\emph{electric current density}).}
   \end{aligned}
\end{tcolorbox}
\noindent
The change rate of the electric flux $\partial_t \int_M \mathbf{D} \cdot d\mathbf{S}$ through a surface $M$ is called \emph{displacement current} through $M$. The Circulation Law for the Fluid Field applied to the magnetic field $\mathbf{H}$ becomes \emph{Ampère's Law}, also called \emph{Ampère's Circuital Law with Maxwell's Correction}, and it equates for each bounded piecewise smooth oriented two-dimensional submanifold $M$ of $\Omega$ with piecewise smooth boundary $\partial M$ the sum of the displacement current and the electric current through $M$ with the circulation of the magnetic field along $\partial M$.
\vspace*{1ex}
\begin{tcolorbox}[ams align*, title=Ampère's Law (integral form)]
   \begin{aligned}
   \int_M (\partial_t \mathbf{D} + \mathbf{j}) \cdot d\mathbf{S}
   &=
   \int_{\partial M} \mathbf{H} \cdot d\mathbf{s}
   \qquad\quad
   \text{(with unit $\textrm{A}$)}
   \\
   \text{(\emph{displacement current $+$\hspace*{3.6ex}}}
   &\phantom{=}
   \\[-1ex]
   \text{\emph{electric current through $M$}}
   &=
   \text{\emph{circulation of magnetic field along $\partial M$}),}
   \end{aligned}
\end{tcolorbox}
\noindent
or equivalently in differential form
\vspace*{1ex}
\begin{tcolorbox}[ams align*, title=Ampère's Law (differential form)]
   \begin{aligned}
   \partial_t \mathbf{D} + \mathbf{j}
   &=
   \operatorname{curl} \mathbf{H}
   \qquad\quad
   \text{(with unit $\tfrac{\textrm{A}}{\textrm{m}^2}$)}
   \\
   \text{(\emph{displacement current $+$\hspace*{1.6ex}}}
   &\phantom{=}
   \\[-1ex]
   \text{\emph{electric current densities}}
   &=
   \text{\emph{curl of magnetic field}).}
   \end{aligned}
\end{tcolorbox}
\noindent
\begin{remark}[Interpretation of Displacement Current]
As described in Remark \ref{rem:cirulationlaw}, the displacement current $\int_M \partial_t \mathbf{D} \cdot d\mathbf{S}$, as well as the electric current $\int_M \mathbf{j} \cdot d\mathbf{S}$, contribute to a charge increase on the ``other side of $M$''. Whereas the electric current $\int_M \mathbf{j} \cdot d\mathbf{S}$ describes the transport of charge, the displacement current $\partial_t \int_M \mathbf{D} \cdot d\mathbf{S}$ describes an increased flux of electricity. The sum of both yield an electricity current with unit Ampère that equals the circulation $\int_{\partial M} \mathbf{H} \cdot d\mathbf{s}$ of the magnetic field along $\partial M$. For an interpretation of the displacement current which is similar in spirit and also uses the integral representation of Ampère's Law, see \cite{Gauthier1983}.
\hfill $\diamond$
\end{remark}

The Circulation Law for the Fluid Field applied to the electric field $\mathbf{E}$ (which corresponds to $-\mathbf{h}$) becomes \emph{Faraday's Law}, also called \emph{Maxwell-Faraday Equation}. In integral form it holds for each bounded piecewise smooth oriented two-dimen\-sional submanifold $M$ of $\Omega$ with piecewise smooth boundary $\partial M$.
\vspace*{1ex}
\begin{tcolorbox}[ams align*, title=Faraday's Law (integral form)]
   \begin{aligned}
   - \int_M \partial_t \mathbf{B} \cdot d\mathbf{S}
   &=
   \int_{\partial M} \mathbf{E} \cdot d\mathbf{s}
   \qquad\quad
   \text{(with unit $\textrm{V}$)}
   \\
   \text{(\emph{negative of rate of change\hspace*{3.6ex}}}
   &\phantom{=}
   \\[-1ex]
   \text{\emph{of magnetic flux through $M$}}
   &=
   \text{\emph{circulation of electric field along $\partial M$}),}
   \end{aligned}
\end{tcolorbox}
\noindent
or equivalently in differential form
\vspace*{1ex}
\begin{tcolorbox}[ams align*, title=Faraday's Law (differential form)]
   \begin{aligned}
   -\partial_t \mathbf{B}
   &=
   \operatorname{curl} \mathbf{E}
   \qquad\quad
   \text{(with unit $\tfrac{\textrm{V}}{\textrm{m}^2}$)}
   \\
   \text{(\emph{negative of rate of change\hspace*{1.6ex}}}
   &\phantom{=}
   \\[-1ex]
   \text{\emph{of magnetic flux density}}
   &=
   \text{\emph{curl of electric field}).}
   \end{aligned}
\end{tcolorbox}
\noindent
We thus have arrived at Maxwell's equations
\begin{gather*}
   \operatorname{curl} \mathbf{E} + \frac{\partial \mathbf{B}}{\partial t} = 0,
   \quad
   \operatorname{div} \mathbf{D} = \varrho,
\\
   \operatorname{curl} \mathbf{H} - \frac{\partial \mathbf{D}}{\partial t} = \mathbf{j},
   \quad
   \operatorname{div} \mathbf{B} = 0,
\end{gather*}
which are completed by material dependent constitutive relations, e.g.
\begin{equation*}   
   \mathbf{D} = \varepsilon_0 \mathbf{E},
   \quad
   \mathbf{B} = \mu_0 \mathbf{H},
\end{equation*}
in free space with permittivity $\varepsilon_0$ and permeability $\mu_0$. For theoretical and experimental approaches to derive constitutive relations in complex media see e.g.\ \cite{Vinogradov2002} and the references therein.


\end{document}